\tikzset{join/.code=\tikzset{after node path={%
\ifx\tikzchainprevious\pgfutil@empty\else(\tikzchainprevious)%
edge[every join]#1(\tikzchaincurrent)\fi}}}
\tikzset{>=stealth',every on chain/.append style={join},
         every join/.style={->}}
\tikzstyle{labeled}=[execute at begin node=$\scriptstyle,
\newtheorem{theorem}{Theorem}[section]
\newtheorem{lemma}[theorem]{Lemma}
\theoremstyle{definition}
\newtheorem{definition}[theorem]{Definition}
\newtheorem{proposition}[theorem]{Proposition}
\theoremstyle{remark}
\newcommand{\CO}{\mathcal{O}}
\newcommand{\CT}{\mathcal{T}}
\newcommand{\CF}{\mathcal{F}}
\newcommand{\CM}{\mathcal{M}}
\def\be{\begin{eqnarray}}
\def\ee{\end{eqnarray}}
\def\bfig{\begin{figure}[H] }
\def\efig{\end{figure}}
\def\bc{\begin{center}}
\def\ec{\end{center}}
\def\lm{\limits}
\def\cham{{\frak{C}}}
\def\msp{{\frak{M}}}
\def\matC{{\mathbb{C}}}
\def\matH{{\mathbb{H}}}
\def\matR{{\mathbb{R}}}
\def\matZ{{\mathbb{Z}}}
\def\fa{{\frak{a}}}
\def\fc{{\frak{c}}}
\def\fb{{\frak{b}}}
\def\ft{{\frak{t}}}
\def\fg{{\frak{g}}}
\def\fA{{\frak{A}}}
\def\fL{{\frak{L}}}
\def\fB{{\frak{B}}}
\def\lb{{\bar{\lambda}}}
\def\lm{{\bar{\mu}}}
\def\lnu{{\bar{\nu}}}
\title{Polynomials associated with fixed points on the instanton moduli space }
\author{Andrey Smirnov}
\address{Department of Mathematics, Columbia University, New York USA}
\email{asmirnov@math.columbia.edu}
\begin{document}

\begin{abstract}
  Using the Okounkov-Maulik stable map, we identify the equivariant cohomology of instanton moduli spaces with the space of polynomials on an infinite number of variables. We define the generalized Jack polynomials as the polynomials representing the classes of fixed points under this identification.
  Using the abelianization theorem of Shenfeld we derive the combinatorial formula for the  expansion of generalized Jack polynomials in the basis of Schur polynomials.
\end{abstract}

\maketitle

%\tableofcontents
\section{Introduction}

The purpose of this paper is twofold. The first is to define a special class of polynomial functions associated to the classes of fixed points in the equivariant cohomology  of instanton moduli spaces. We show that these functions provide a natural generalization of symmetric Jack polynomials, in particular they inherit and generalize many of their properties. We call these new functions generalized Jack polynomials.

The generalized Jack polynomials form a basis in the space of colored polynomials, labeled by $r$-tuples of partitions $\lb=[\lambda_1,...,\lambda_r]$. In section \ref{def} we derive some properties of these functions. For $r=1$ we obtain the standard Jack polynomials; the results of this paper both recover some known results and provide new results on Jack polynomials.

Secondly we derive a combinatorial formula for the expansion of generalized polynomials in the basis of Schur  polynomials. These formulae can be used to compute the generalized polynomials and provide a good tool for explicit computations (see Appendix B for examples).

This work was motivated in part by recent papers on the AGT conjecture \cite{Alday:2009aq,Wyllard:2009hg,Mironov:2009by}, which relates correlation functions in $2d$ Liouville conformal field theory with correlators in $4d$ gauge theories. As explained in \cite{Alba:2010qc,Fateev:2011hq,Belavin:2011sw} the AGT conjecture implies the existence of the special basis in the CFT Fock space (known as AFLT states). In this basis the expansion of $2d$ conformal blocks coincide with expansion of $4d$ Nekrasov functions, which establishes the relation between both sides of the conjecture. After bosonization of the Virasoro (or, generally, $W_N$) algebra this basis coincides with the basis of generalized Jack polynomials considered in this paper. Thus, the results of this paper provide explicit combinatorial formula for AFLT states of arbitrary rank.

Another approach to AGT conjecture is based on the free field representations of CFT conformal blocks \cite{Dotsenko:1984nm}. This approach was investigated by several authors \cite{Itoyama:2010ki, Mironov:2010pi,Zhang:2011au,Kanno:2013aha}. In particular in \cite{Morozov:2013rma, Mironov:2013oaa} is was shown that
 the coefficients for the expansion of the conformal blocks in the free field formalism have a form of Selberg integrals of generalized Jack polynomials.
The AGT conjecture, therefore, implies that the $SU(N)$ Selberg integrals of generalized Jack polynomials have nice properties: they completely factorize to  products of linear multiples what gives a powerful generalization of of Kadell's the theorems \cite{Kadell1,Kadell2}. This conjecture was checked explicitly in some cases in \cite{Morozov:2013rma, Mironov:2013oaa}. While we do not discuss the Seleberg integrals of generalized Jack polynomials in this paper, the conjecture mentioned above gives one more example of properties of these functions that generalize some known properties of Jack polynomials. We believe, that all properties of Jack polynomials can be "lifted" to generalized polynomials.

We outline the main idea of the paper. It is well known, that the standard Jack polynomials can be defined as the classes of fixed points in the equivariant cohomology of Hilbert schemes of points on a complex plane. Let $Hilb_n$ be the space of ideals $J\in \matC[x,y] $ with codimension $\dim \matC[x,y]/J=n $.    The scaling action of a torus $C=(\matC^{*})^2$ on polynomials:
$$
(z_1,z_2) \cdot p(x,y)=p(z_1 x, z_2 y)
$$
gives rise to the action of $C$ on the Hilbert scheme $Hilb_n$. Obviously, the ideal  $J\in Hilb_n$ is fixed under $C$-action if and only if
it is generated by monomials $J=\langle x^{a_1} y^{b_1},..., x^{a_m} y^{b_m} \rangle$. Moreover, the fixed ideals of finite codimension $n$ are in one to one correspondence with partitions of $n$. The partition $\lambda=(\lambda_1,\lambda_2,...,\lambda_m )$ with $|\lambda| = \lambda_1+...+\lambda_m=n$
uniquely defines the $C$-fixed ideal $J\in Hilb_n$  given by $J=\langle x^{\lambda_1},x^{\lambda_2}  y,...,x^{\lambda_m}y^{m}, y^{m+1} \rangle$.

Thus, the fixed points of $Hilb_{n}$ are isolated and labeled by partitions of $n$. By Nakajima's construction \cite{Nak1,Nak2}, the localized equivariant cohomology of the Hilbert scheme can be identified with the space of polynomials (the Fock space):
\be
\label{ni}
H^{\bullet}_{C}\Big( \coprod\limits_{n=0}^{\infty} \, Hilb_n  \Big)\simeq \matC[p_1,p_2,....]\otimes \matC(t_1,t_2)
\ee
where $t_1, t_2$ - are the equivariant parameters corresponding to the characters of $C$.
The classes of fixed points $[\lambda]$ provides the canonical basis of this space. Under the identification (\ref{ni}) they coincide with the basis of Jack polynomials $j_{\lambda}(p_k)$ with parameter $\beta=-t_2/t_1$ \cite{Nak3}:
$$
[\lambda] \in H^{\bullet}_{C}\Big( \coprod\limits_{n=0}^{\infty} \, Hilb_n  \Big) \Leftrightarrow j_{\lambda}(p_k) \in \matC[p_1,p_2,....]\otimes \matC(t_1,t_2)
$$

Let $\msp(r,n)$ be the space of rank $r$ -instantons on the complex projective plane with topological charge $n$ (see section \ref{def}). For $r=1$ it is isomorphic to the Hilbert scheme of points $\msp(1,n)=Hilb_n$. Similarly, the space
$\msp(r,n)$ comes with the natural action of torus $C=(\matC^*)^{r+2}$ on it. The fixed set $\msp(r,n)^{C}$ consist of isolated fixed points labeled by $r$-tuples of partitions, $\bar{\lambda}=[\lambda_1,...,\lambda_{r}]$, with the total number of boxes given by  $|\bar{\lambda}|=|\lambda_1|+...+|\lambda_r|=n$. After localization the classes of fixed points $[\bar{\lambda}]$ form a basis in cohomology of $\msp(r,n)$.

As with (\ref{ni}) we use a stable map introduced by Okounkov and Maulik in \cite{Maulik:2012wi} to  identify the cohomology of instanton moduli spaces with  polynomials on an infinite number of variables $p_{n}^{(i)}$ $n\in {\mathbb{Z}}_{+}$ colored by index $i=1...r$:
\be
\label{oi}
H^{\bullet}_{C}\Big( \coprod\limits_{n=0}^{\infty} \, \msp(r,n)  \Big) \stackrel{Stab}{\longleftarrow} \matC[p^{(1)}_1,p_2^{(1)},....p^{(2)}_1,p_2^{(2)},...p^{(r)}_1,p_2^{(r)}...]\otimes \matC(t_1,t_2,u_1,...,u_r)
\ee
When $r\neq 1$ there are however two canonical nonequivalent choices of stable maps $Stab_{\cham^{\prime}_{+}}$ and $Stab_{\cham^{\prime}_{-}}$,   related to the choice of a chamber in $C$. Thus, in this case, we have two distinct identifications (\ref{oi}) and we define two sets of polynomials $J_{\lb}(p_{n}^{k})$ and $J_{\lb}^{*}(p_{n}^{k})$ (labeled now by $r$-tuples of partitions) by:
$$
Stab_{\cham^{\prime}_{+}}( J_{\lb} )=Stab_{\cham^{\prime}_{-}}( J_{\lb}^{*} ) =[\lb] \in H^{\bullet}_{C}\Big( \coprod\limits_{n=0}^{\infty} \, \msp(r,n)  \Big)
$$
When $r=1$ our definition reproduces the standard Jack polynomials $J_{\lambda}=J^{\ast}_{\lambda}=j_{\lambda}$ (see for example, explicit formulae in appendix B). The space $(\ref{oi})$ comes with a natural scalar product, and we will show that $J_{\lb}$ and $J_{\lb}^{*}$ give two bases in the space of polynomials which are dual with respect to this scalar product. In section \ref{def} we derive several properties of generalized Jack polynomials: explicit formulae for scalar products, Cauchy identities, and some symmetries corresponding to change of a chamber. By proposition \ref{efun} below the generalized Jack polynomials can also be defined as a basis of eigenfunctions for some integrable system ($r$-interacting Calogero-Moser systems).
For $r=1$ all these results specialize to well known properties of Jack polynomials.

Let $s_{\lambda}$ be the standard Schur polynomials giving a basis in $\matC[p_1,p_2,...]$. Let us consider a basis in (\ref{oi}) labeled by $r$-tuples of partitions and given by a product of Schur functions:
$$
s_{\lb}=s_{\lambda_{1}}(p_k^{(1)}) s_{\lambda_{2}}(p_k^{(2)})...s_{\lambda_{r}}(p_k^{(r)}) \in \matC[p^{(1)}_1,p_2^{(1)},....p^{(2)}_1,p_2^{(2)},...p^{(r)}_1,p_2^{(r)}...]
$$
Similarly let $s^{*}_{\lambda}(p_k)=s^{*}_{\lambda}(-t_2/t_1 p_k)$ and $s^*_{\lb}=s^*_{\lambda_{1}}(p_k^{(1)}) s^*_{\lambda_{2}}(p_k^{(2)})...s^*_{\lambda_{r}}(p_k^{(r)})$.

We use the abelianization technique developed by D. Shenfeld in his thesis \cite{Shenf1} to prove the following theorem:

\begin{theorem}
\label{thmone}
The generalized Jack polynomials have the following expansions in the basis of Schur functions:
$$
J_{\lb}=\sum\limits_{|\lm|=|\lb|}\, T_{\lb,\lm} s_{\lm}, \ \ \ J^*_{\lb}=\sum\limits_{|\lm|=|\lb|}\, T^*_{\lb,\lm} s^*_{\lm},
$$
with the coefficients given by the following  combinatorial formulas:
\begin{small}
$$
 {{T}}_{\bar \lambda,\bar\mu}=\dfrac{1}{\frak{z}({\bar \mu})} \sum\limits_{\sigma\in S_{|\lb|}}\,
\dfrac{\prod\limits_{\Box_1,\Box_2=1}^{|\lb|} \, \Big(\varphi^{\sigma\bar\lambda}_{\Box_2} -
 \varphi^{\sigma\bar\lambda}_{\Box_1} +t_1 \Big)^{\ast\langle \rho^{\bar\mu}_{\Box_2}-\rho^{\bar\mu}_{\Box_1}+t_1 | \cham_{-}\rangle} \prod\limits_{d=1}^{r}\prod\limits_{\Box=1}^{|\lb|}\Big(u_d -\varphi^{\sigma\bar\lambda}_{\Box} \Big)^{\ast \langle u_d-\rho^{\bar\mu}_{\Box} | \cham_{-} \rangle}}
 {\prod\limits_{\Box_1>\Box_2 \atop \rho^{\bar\mu}_{\Box_1}\neq \rho^{\bar\mu}_{\Box_2} } \Big( \varphi^{\sigma\bar\lambda}_{\Box_1}-\varphi^{\sigma\bar\lambda}_{\Box_2} \Big)
 \Big( \varphi^{\sigma\bar\lambda}_{\Box_1}-\varphi^{\sigma\bar\lambda}_{\Box_2} +\hbar\Big)}
$$
$$
\ \ \ \  {{T}}^{*}_{\bar \lambda,\bar\mu}=\dfrac{1}{\frak{z}({\bar \mu})} \sum\limits_{\sigma\in S_{|\lb|}}\,
\dfrac{\prod\limits_{\Box_1,\Box_2=1}^{|\lb|} \, \Big(\varphi^{\sigma\bar\lambda}_{\Box_2} -
 \varphi^{\sigma\bar\lambda}_{\Box_1} +t_1 \Big)^{\ast\langle \rho^{\bar\mu}_{\Box_2}-\rho^{\bar\mu}_{\Box_1}+t_1 | \cham_{+}\rangle} \prod\limits_{d=1}^{r}\prod\limits_{\Box=1}^{|\lb|}\Big(u_d -\varphi^{\sigma\bar\lambda}_{\Box} \Big)^{\ast \langle u_d-\rho^{\bar\mu}_{\Box} | \cham_{+} \rangle}}
 {\prod\limits_{\Box_1<\Box_2 \atop \rho^{\bar\mu}_{\Box_1}\neq \rho^{\bar\mu}_{\Box_2} } \Big( \varphi^{\sigma\bar\lambda}_{\Box_1}-\varphi^{\sigma\bar\lambda}_{\Box_2} \Big)
 \Big( \varphi^{\sigma\bar\lambda}_{\Box_1}-\varphi^{\sigma\bar\lambda}_{\Box_2} +\hbar\Big)}
$$
\end{small}
\end{theorem}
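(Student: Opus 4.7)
The plan is to apply the abelianization theorem of Shenfeld directly to the stable maps $Stab_{\cham'_\pm}$, after first identifying the Schur bases $s_{\lm}$ and $s^*_{\lm}$ with the classes coming from the torus-abelianized moduli space. The geometric setup is that $\msp(r,n)$ is a hyperkähler GIT quotient by $GL(n)$; Shenfeld's theorem expresses stable envelopes for such quotients in terms of an abelianized model in which $GL(n)$ is replaced by its maximal torus $T(n)$. The $C$-fixed points of the abelianized space are indexed by \emph{ordered} length-$n$ sequences of boxes distributed among the $r$ framing indices, and the $S_{|\lb|}$ action that collapses ordered to unordered sequences is precisely the source of the sum over $\sigma\in S_{|\lb|}$ in the claimed formula; correspondingly, $1/\frak{z}(\lm)$ is the reciprocal of the order of the $S_{|\lb|}$-stabilizer of the unordered configuration $\lm$.

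First I would establish the Schur-basis dictionary: under the identification (\ref{oi}), the product basis $s_{\lm}$ corresponds to the $T(n)$-fixed point basis in the chamber $\cham'_+$, since Schur polynomials are $GL$-characters and the Nakajima construction intertwines induction from $T(n)$ to $GL(n)$ with the multiplication rule that produces products of Schur functions on the bosonic side. Replacing $\cham'_+$ by $\cham'_-$ changes $s$ to $s^*$; this is the standard effect of the chamber flip on the abelianized basis, and accounts for the substitution $p_k \mapsto -\tfrac{t_2}{t_1}p_k$ used in the definition of $s^*_\lambda$.

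Next I would invoke Shenfeld's formula for the stable envelope restriction. For the abelianized fixed point labeled by $\lm$, the restriction of $Stab_{\cham'_\mp}([\lb])$ is a sum over $\sigma \in S_{|\lb|}$ of a ratio of linear factors. The numerator collects one factor $\varphi^{\sigma\lb}_{\Box_2}-\varphi^{\sigma\lb}_{\Box_1}+t_1$ per ordered pair of boxes (from the $GL(n)$ roots of the nonabelian quotient) and one factor $u_d-\varphi^{\sigma\lb}_{\Box}$ per box/framing pair (from the framing bundle), each raised to an exponent $\ast\langle\,\cdot\,|\cham_\pm\rangle$ which is the chamber step function selecting the appropriate "positive half" of the weight. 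The denominator is the equivariant Euler class of the tangent space at the abelianized fixed point $\lm$, restricted to those directions in which the stabilizer character does not vanish, together with the symplectic shift $\hbar = t_1+t_2$ coming from the cotangent fiber. The $J$ and $J^*$ cases differ only by the interchange $\cham_-\leftrightarrow\cham_+$, which simultaneously flips the denominator ordering $\Box_1>\Box_2$ to $\Box_1<\Box_2$.

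The main obstacle will be the combinatorial bookkeeping rather than any conceptual novelty. Concretely, I must verify: (i) that $\varphi^{\sigma\lb}_\Box$ and $\rho^{\lm}_\Box$ are exactly the tautological weights at the permuted abelianized and nonabelianized fixed points, in the conventions used to set up (\ref{oi}); (ii) that the exponents $\ast\langle\,\cdot\,|\cham_\pm\rangle$ in the numerator coincide with the attracting/repelling conditions appearing in Shenfeld's support axiom for the stable envelope; and (iii) that the restriction $\rho^{\lm}_{\Box_1}\neq \rho^{\lm}_{\Box_2}$ in the denominator product correctly implements the condition that one discards the root characters which vanish at $\lm$ and hence contribute to the stabilizer rather than to the tangent weights. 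Once these identifications are made precisely, both formulas follow from a single application of the abelianization theorem, with the $T^*_{\lb,\lm}$ case obtained from the $T_{\lb,\lm}$ case by the chamber swap described above.
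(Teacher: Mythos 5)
There is a genuine gap, and it is one of direction. Shenfeld's theorem (Theorem \ref{Shenth}) computes \emph{restrictions} of stable envelopes, $\left.Stab_{\cham_{\pm}}(\lb)\right|_{\lnu}$, and a direct application of it --- which is what you propose --- yields the expansion of the \emph{Schur} classes in the \emph{Jack} basis, not the other way around. Concretely, using the factorization $Stab_{\cham_{+}}=Stab_{\cham^{\prime}_{+}}\circ Stab_{\cham_{+}/\cham^{\prime}_{+}}$ together with $Stab_{\cham_{+}/\cham^{\prime}_{+}}(\lb)=s_{\lb}$, the restrictions supplied by abelianization give $s_{\lb}=\sum_{\lnu} U_{\lb,\lnu}\, J_{\lnu}/E_{\lnu,\lnu}$ with $U_{\lb,\lnu}=\left.Stab_{\cham_{+}}(\lb)\right|_{\lnu}$, i.e.\ the matrix of Theorem \ref{invth}. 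This is the transition matrix inverse to the one asserted in Theorem \ref{thmone}, so a ``single application of the abelianization theorem'' cannot produce $T_{\lb,\lm}$. (Your phrase ``restriction of $Stab_{\cham^{\prime}_{\mp}}([\lb])$'' also conflates the image and the preimage: $J_{\lb}$ is defined as the class whose image under $Stab_{\cham^{\prime}_{+}}$ is $[\lb]$, so its Schur coefficients are entries of an inverse matrix, not restrictions.) The missing step is precisely how the paper inverts without inverting: writing $T=E\,U^{-1}$ and pairing the two expansions in (\ref{eqsj}) under the scalar product, the orthogonality relations $\langle s_{\lb},s^{*}_{\lm}\rangle=\delta_{\lb,\lm}$ and $\langle J_{\lb},J^{*}_{\lm}\rangle=\delta_{\lb,\lm}E_{\lb,\lb}$ give $U E^{-1} U^{*t}=1$, hence $T_{\lb,\lm}=U^{*}_{\lm,\lb}$.

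This duality step is not cosmetic bookkeeping: it is what produces the index structure of the stated formula. In $T_{\lb,\lm}$ the permuted weights are $\varphi^{\sigma\lb}$ while the exponents are governed by $\rho^{\lm}$, the symmetrization factor is $1/\frak{z}(\lm)$ rather than $1/\frak{z}(\lb)$, and --- most tellingly --- the unstarred polynomials $J_{\lb}$, defined through the chamber $\cham^{\prime}_{+}$, acquire coefficients written with $\cham_{-}$. All three features are exactly the transpose-and-star swap $T_{\lb,\lm}=U^{*}_{\lm,\lb}$; a direct-restriction reading would attach $\cham_{+}$ data and $\frak{z}(\lb)$ to $J_{\lb}$, and your proposal offers no mechanism for the swap beyond asserting that the two cases ``differ by interchanging the chambers.'' Separately, the Schur dictionary should be justified as in the paper: $Stab_{\cham_{+}/\cham^{\prime}_{+}}(\lb)=s_{\lb}$ follows from Shenfeld's rank-one computation plus the Maulik--Okounkov factorization lemma, not from an appeal to induction of $GL$-characters; and $s^{*}_{\lb}$ needs no abelianized interpretation at all --- it is simply the dual basis under the pairing, which is all the duality argument requires.
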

Let us briefly explain the notations in \ref{thmone} (details can be found in section \ref{absec}). Given an $r$-tuple of partitions  $\lb$ we assume that its boxes ordered as in figure \ref{ordfig} below. The sums run over the permutations of boxes in $\lb$. The products in both formulas run over boxes in $\lb$. Given a box $\Box \in \lb =[\lambda_1,...,\lambda_r]$ which belongs to the partition $\lambda_k$ and has standard coordinates $x_{\Box}$ and $y_{\Box}$ we denote:
\be
\label{croot}
\varphi_{\Box}^{\lb}=u_k+x_{\Box}  t_1+y_{\Box}  t_2,
\ee
and its projection to the torus $t_2+t_1=0$:
\be
\rho_{\Box}^{\lb}=u_k+x_{\Box} t_1-y_{\Box} t_1.
\ee
We use Shenfeld's notations:
\be
\label{Shenfnot}
(x)^{* a}=\left\{\begin{array}{ll}
x & a>0\\
\hbar-x & a<0\\
1 & a=0
\end{array}\right.
\ee
with $\hbar=t_1+t_2$. Let us consider two opposite chambers:
$$
\cham_{+}=\{ u_i-u_{i+1} \gg t_1>0\}, \ \ \cham_{-}=\{ u_{i+1}-u_{i} \gg t_1<0 \}
$$
We denote by $\langle x, \cham \rangle$ the sign of the character $x$ on the chamber $\cham$. Finally $\frak{z}(\lm)$ is the numerical factor given explicitly by formula (\ref{fzet}).

Note, even at $r=1$ this gives nontrivial formula for coefficients in the expansion of standard Jack polynomials in the Schur polynomials, which gives inverse formula for one computed in \cite{Shenf1}.

Most of the results of this article have a straightforward generalization to the equivariant K-theory. This provides analogous combinatorial formulas for Macdonald polynomials and their generalized version. We plan to publish these in a separate paper.

\section*{Acknowledgments}
We are deeply grateful to D. Shenfeld for sharing and numerous explanations of his results on abelianization of stable envelopes. We  would also like to thank M. Mcbreen, A. Okounkov and V. Pal for useful discussions. This work was supported in part by RFBR grant 12-01-00482.

\section{Definition of generalized Jack polynomials \label{def}}
 Let $\msp(r,n)$ be the moduli space of framed rank $r$ torsion free sheaves $\CF$ on ${\mathbb{P}}^2$ with fixed Chern class $c_{2}(\CF)=n$. A framing of $\CF$ is the choice of isomorphism
\be
\label{frm}
 \left.\CF\right|_{{\mathbb{P}}} =\CO^{\oplus r}_{{\mathbb{P}}}
\ee
where ${\mathbb{P}}\subset {\mathbb{P}}^2$ is considered as a line at infinity. The space has a natural action by the group $G=GL(r)\times GL(2)$
where $GL(2)$ acts on ${\mathbb{P}}^2$ preserving infinity line, and $GL(r)$ acts by changing framing (\ref{frm}).  Let $C$ is the maximal torus of $G$ and
$A$ be the maximal torus of $GL(r)$ such that $A\subset C$. Note that $\msp(1,n)=Hilb_{n}$ - the Hilbert scheme of $n$ points on $\matC^2$ \cite{Nak2}.  We will refer to $\msp(r,n)$ as the instanton moduli space. By definition, a sheaf $\CF$ is fixed under action of framing torus $A$ if it splits
to a sum of rank one sheaves $\CF= J_1\oplus...\oplus J_r$ therefore we have:
$$
\msp^{A}(r,n)=\coprod \limits_{n_1+...+n_r=n}\, Hilb_{n_1}\times...\times Hilb_{n_r}
$$
The fixed set of bigger torus $C$ is obviously the fixed set of two-dimensional torus $C/A$ on $\msp^{A}(r,n)$. Torus $C/A$ acts on the Hilbert schemes as described in the introduction, thus fixed set of $C$ is given by isolated fixed points labeled by $r$-tuples of partitions with total number
of $n$-boxed:
$$
\msp^{C}(r,n)=\{ \lb=[\lambda_1,...,\lambda_r] : |\lb|=\sum_i |\lambda_i|=n \}
$$
We use the Nakajima construction of Heisenberg algebra \cite{Nak2} to identify cohomology of Hilbert schemes with "boson Fock space":
$$
H^{\bullet} \Big( \coprod\limits_{n=0}^{\infty} \, Hilb_{n} \Big)\simeq \matC[p_1,p_2,...]=\textsf{F}
$$
Define $\msp(r)=\coprod_{n=0}^{\infty} \msp(r,n)$ then from above we have $H^{\bullet}(\msp^{A}(r)) \simeq \textsf{F}^{\otimes r}$. In particular we have:
$$
H_{C}^{\bullet} \Big( \msp^{A}(r)  \Big)\simeq \matC[p^{(1)}_1,p_2^{(1)},....p^{(2)}_1,p_2^{(2)},...p^{(r)}_1,p_2^{(r)}...]\otimes \matC(t_1,t_2,u_1,...,u_r)
$$
where $u_1,..u_r,t_1, t_2$ - the equivariant parameters corresponding to characters of $C$ (  $u_i$ are the characters of $A$ ).

The action of $A \subset C$ on the moduli space provides corresponding Lie algebras $\fa \subset \fc$ with chamber
decompositions (Appendix A). The chambers in $\fa$ are the standard Weyl chambers of $GL(r)$.
We consider:
\be
\label{chambersp}
\cham^{\, \prime}_{+}=\{u_1>u_2>...>u_r\}, \ \ \cham^{\, \prime}_{-}=\{u_1<u_2<...<u_r\}
\ee
two fixed, opposite chambers in $\fa$.
There are infinitely many chambers in $\fb$ \cite{Maulik:2012wi, Smirnov:2013hh}. We fix two opposite chambers in $\fb$ defined uniquely by conditions
$\cham^{\, \prime}_{\pm}\subset \overline{\cham}_{\pm}$ i.e. $\cham^{\, \prime}_{\pm}$
are the walls of these chambers and $\pm t_1 >0$. Explicitly we have:
\be
\label{chambers}  \cham_{+}=\{ u_i-u_{i+1} \gg t_1>0\}, \ \ \cham_{-}=\{ u_{i+1}-u_{i} \gg t_1<0 \}
\ee
Given a chamber $\cham \subset \fa$ we can consider the Okounkov-Maulik stable map (see appendix A):
$$
H^{\bullet}_{C}\Big( \msp^{A}(r)  \Big) \stackrel{Stab_{\cham_{\pm}}}{\longrightarrow} H_{C}^{\bullet}\Big( \msp(r)  \Big)
$$
After localization these maps become isomorphisms and the equivariant classes of fixed points $[\lb]\in H_{C}^{\bullet}\Big( \msp(r)  \Big)$
give a basis in cohomology.

\begin{definition}
Let us consider stable maps defined by chambers $\cham_{\pm}$:
\be
\label{defn}
\matC[p^{(1)}_1,p_2^{(1)},....p^{(2)}_1,p_2^{(2)},...p^{(r)}_1,p_2^{(r)}...]\otimes \matC(t_1,t_2,u_1,...,u_r) \stackrel{Stab_{\cham_{\pm}}}{\longrightarrow} H_{C}^{\bullet}\Big( \CM(r)  \Big)
\ee
The generalized Jack polynomials $J_{\lb}$ and $J_{\lb}^{*}$ are defined as polynomials corresponding to $C$ fixed points under the above identifications:
$$
Stab_{\cham^{\prime}_+}\Big( J_{\lb} \Big)=Stab_{\cham^{\prime}_-}\Big( J_{\lb}^{*} \Big)= [\lb] \in H_{C}^{\bullet}\Big( \CM(r)  \Big)
$$
We conclude this sections by describing a few simple properties of generalized Jack polynomials.
\end{definition}

\begin{proposition}
Define grading in polynomial ring (\ref{defn})  by $\deg(p_{k}^{(i)})=k$. Then
$$
\deg(J_{\lb})=\deg(J^{*}_{\lb})=|\lb|
$$
\end{proposition}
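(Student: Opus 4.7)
The plan is to show that everything on both sides of the identification (\ref{defn}) is naturally graded by the instanton number $n$ and that the Okounkov--Maulik stable map is homogeneous with respect to this grading; the statement then follows immediately from the fact that $[\bar\lambda]$ lives in the component labeled by $n=|\bar\lambda|$.

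First I would observe that the Nakajima isomorphism $H^{\bullet}(\coprod_n Hilb_n)\simeq \textsf{F}$ carries the geometric $n$-grading on the left to the standard polynomial grading $\deg(p_k)=k$ on the right: the Heisenberg creation operator $p_{-k}$ shifts instanton number by $k$, and a monomial $p_{k_1}\cdots p_{k_\ell}$ lands in $H^{\bullet}(Hilb_{k_1+\cdots+k_\ell})$. Tensoring over $r$ colors, the polynomial of total degree $n$ in the $p_k^{(i)}$ corresponds to a class supported on the connected component
$$
\coprod_{n_1+\cdots+n_r=n} Hilb_{n_1}\times\cdots\times Hilb_{n_r}\;\subset\;\msp^A(r),
$$
which is precisely $\msp^A(r,n)$. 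Hence the polynomial grading $\deg(p_k^{(i)})=k$ matches the instanton-number grading on $H^\bullet_C(\msp^A(r))$.

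Next I would invoke the defining property of the stable map (Appendix A): $Stab_{\cham_\pm}$ sends cohomology classes supported on a component of the fixed locus to classes on $\msp(r)$ of the \emph{same} topological charge. Concretely,
$$
Stab_{\cham_\pm}\Bigl(H^{\bullet}_C\bigl(\msp^A(r,n)\bigr)\Bigr)\subset H^{\bullet}_C\bigl(\msp(r,n)\bigr),
$$
so the map is homogeneous of degree $0$ with respect to the instanton-number grading on both sides. This is essentially a tautology from the stable envelope construction, which produces classes supported on the preimage of the fixed component under a normal-direction decomposition, and therefore does not mix instanton numbers.

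Finally, the fixed point $\bar\lambda$ lies in $\msp(r,|\bar\lambda|)$, so $[\bar\lambda]\in H^{\bullet}_C(\msp(r,|\bar\lambda|))$. By the previous paragraph its preimage under $Stab_{\cham^{\prime}_\pm}$ must lie in the $n=|\bar\lambda|$ piece of the polynomial ring, i.e. in the span of monomials of total degree $|\bar\lambda|$. Thus $\deg(J_{\bar\lambda})=\deg(J^*_{\bar\lambda})=|\bar\lambda|$. There is no serious obstacle here; the only point that deserves care is verifying that the stable map preserves the component decomposition, which is immediate from its definition.
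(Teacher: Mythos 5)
Your argument is correct and is in substance the same as the paper's one-line proof: the paper simply notes that under the Nakajima construction $p_n$ carries the grading corresponding to instanton charge $n$, which is exactly the grading-matching you spell out, together with the (tautological) fact that $Stab$ respects the decomposition by charge and that $[\bar\lambda]$ sits in the component $n=|\bar\lambda|$. If anything, your bookkeeping via the instanton-number decomposition is the cleaner way to phrase it, since the stable envelope preserves the charge decomposition on the nose while it shifts cohomological degree by half the codimension of the fixed locus.
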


\begin{proof}
In the Nakajima construction the cohomological degree of $p_n$ is $2n$. The proof follows.
\end{proof}

\begin{proposition}
For $r=1$ we have $J_{\lambda}(p_k)=J_{\lambda}^{*}(p_k)=j_{\lambda}(p_k)$ where $j_{\lambda}(p_k)$ is the standard Jack polynomial.
\end{proposition}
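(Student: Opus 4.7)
The plan is to reduce both $J_\lambda$ and $J^*_\lambda$ to the standard Nakajima identification of the equivariant cohomology of the Hilbert scheme with the Fock space. For $r=1$, the framing group $GL(1)=\mathbb{C}^*$ acts trivially on $\msp(1,n)=Hilb_n$: a scalar rescaling of the trivialization $\mathcal{F}|_{\mathbb{P}}\simeq \mathcal{O}_{\mathbb{P}}$ induces the identity on the moduli space of framed sheaves. Consequently its maximal torus $A$ acts trivially, so $\msp^A(1)=\msp(1)$ as $C$-varieties, and the full torus decomposition of the Lie algebra degenerates: the ``chambers'' $\cham^\prime_{\pm}\subset\fa$ contain no wall-crossing data.

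Next I would invoke the characterization of the Okounkov-Maulik stable envelope by its support, diagonal-normalization, and degree-bound axioms. When $\msp^A=\msp$, the support and degree conditions are vacuous, and the diagonal-normalization condition reduces the envelope to the identity map on $H^\bullet_C(\msp)$, independently of the chamber $\cham$. Hence $Stab_{\cham^\prime_{+}}=Stab_{\cham^\prime_{-}}=\mathrm{id}$, and the two defining identities
$Stab_{\cham^\prime_{+}}(J_\lambda)=Stab_{\cham^\prime_{-}}(J^*_\lambda)=[\lambda]$
immediately force $J_\lambda=J^*_\lambda$, both being equal to the polynomial representing $[\lambda]$ under the Nakajima isomorphism $H^\bullet_C(\coprod_n Hilb_n)\simeq \matC[p_1,p_2,\ldots]\otimes\matC(t_1,t_2)$ (the single color $p_k^{(1)}$ tautologically matches the standard power-sum basis).

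To close the argument I would cite Nakajima's theorem \cite{Nak3}: under the Heisenberg-algebra identification above, the fixed point class $[\lambda]$ corresponds to the Jack polynomial $j_\lambda(p_k)$ with parameter $\beta=-t_2/t_1$. This gives $J_\lambda=J^*_\lambda=j_\lambda$, as claimed.

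The main point that deserves care is the triviality of the stable envelope when $A$ acts trivially on the ambient variety. One must check that the normalization conventions of \cite{Maulik:2012wi} carry no hidden polarization factor in this degenerate case, so that the diagonal part of $Stab_\cham$ is exactly $1$ rather than the Euler class of a bundle. Once this is confirmed, everything else is a direct chain of tautological identifications, and the proposition follows.
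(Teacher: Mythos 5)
Your proposal is correct and follows essentially the same route as the paper: the paper's proof simply asserts that for $r=1$ the stable maps $Stab_{\cham^{\prime}_{\pm}}$ are trivial and cites \cite{Nak3} for $[\lambda]\leftrightarrow j_{\lambda}$, while you additionally spell out why triviality holds (the framing torus $A$ acts trivially on $\msp(1,n)=Hilb_n$, so $\msp^{A}=\msp$, the normal bundle vanishes, and the normalization axiom forces $Stab_{\cham}=\mathrm{id}$ for either chamber). This added justification is consistent with the paper's conventions and closes the argument in the same way.
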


\begin{proof}
For $r=1$ the stable maps $Stab_{\cham_\pm}$ are trivial. The classes of fixed point $[\lambda] \in H^{\bullet}_{C}( \coprod_n  Hilb_{n})$ are given by standard Jack polynomials \cite{Nak3}.
\end{proof}
Let us consider the standard pairing on cohomology:
$$
H^{\bullet}_{C}\Big(\msp^{A}(r)\Big)\times H^{\bullet}_{C}\Big(\msp^{A}(r)\Big) \stackrel{\langle , \rangle}{\longrightarrow}  H^{\bullet}_{C}(pt)
$$
given by:
$$
\langle \alpha,\beta \rangle \rightarrow \int\limits_{\msp^{A}(r)}\, \alpha \cup \beta
$$
and the integral denotes the equivariant residue. By definition, the classes of stable points form an orthogonal basis with respect to this product.
For example for $r=1$, when classes of fixed points are represented by Jack polynomials $j_{\lambda}$ we obtain:
\be
\label{intef}
\langle j_{\lambda}, j_{\mu} \rangle= \int\limits_{Hilb_{n}} [\lambda] [\mu] = \delta_{\lambda,\mu}  e_{\lambda,\mu}(0)
\ee
where $e_{\lambda,\mu}(0)$  is the character of tangent space to the point $\lambda$, given explicitly by formula (\ref{efor}). This scalar product is well known, (and usually used to define the Jack polynomials). Explicitly, the scalar product of two polynomials $r(p_{n}), s(p_{k})$ can be computed as follows:
\be
\label{spdif}
\langle r(p_{n}), s(p_{k}) \rangle= \left. \Big( r(-n t_1/t_2 \frac{\partial}{\partial p_k}) s(p_{k}) \Big) \right|_{p_k=0}
\ee
where we imply that the polynomial $r(p_{n})$ is substituted by differential operator $p_n \rightarrow -n t_1/t_2 \partial_{p_n}$ and applied to
the polynomial $s(p_{k})$.

\begin{proposition}
The basis $J^{*}_{\lb}$ is dual to $J_{\lb}$ with respect to this scalar product:
\be
\langle J_{\lb},  J_{\lm}^{*} \rangle =  E_{\lb,\lm}
\ee
with $E_{\lb,\lm}=\prod\limits_{i,j=1}^{r} \, e_{\lambda_{i},\mu_{j}}(u_i-u_j)$ and
\be
\label{efor}
e_{\lambda,\mu}(u)=\prod\limits_{\Box\in \lambda} ( u+(a_{\lambda}(\Box)+1)t_1-l_{\mu}(\Box) t_2 )
\prod\limits_{\Box \in \mu} \,( u-a_{\mu}(\Box)t_1+(l_{\lambda}(\Box)+1) t_2 )
\ee
in particular $E_{\lb,\lm} =0$ if $\lb\neq\lm$.
\end{proposition}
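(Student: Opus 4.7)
The plan is to reduce the scalar product $\langle J_{\lb}, J^{*}_{\lm}\rangle$, which by definition is an equivariant integral over $\msp^{A}(r)$, to an equivariant integral over $\msp(r)$ of the cup product of fixed-point classes $[\lb]\cup[\lm]$, and then to evaluate the latter by Atiyah--Bott localization.

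The crucial input is the adjointness property of the Maulik--Okounkov stable envelopes associated to opposite chambers: for any classes $\alpha,\beta\in H^{\bullet}_{C}(\msp^{A}(r))$ one has
$$\int_{\msp(r)} Stab_{\cham^{\prime}_{+}}(\alpha)\cup Stab_{\cham^{\prime}_{-}}(\beta) \;=\; \int_{\msp^{A}(r)} \alpha\cup\beta.$$
This is effectively part of the characterizing data of the stable envelope (\cite{Maulik:2012wi}, cf.\ Appendix A): the diagonal restrictions of $Stab_{\cham^{\prime}_{+}}$ and $Stab_{\cham^{\prime}_{-}}$ at a fixed component are the Euler classes of the repelling and attracting halves of the normal bundle, whose product is the full Euler class that appears in localization, while the support and divisibility axioms kill all off-diagonal contributions. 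Specializing this identity to $\alpha=J_{\lb}$ and $\beta=J^{*}_{\lm}$ and using the defining relations $Stab_{\cham^{\prime}_{+}}(J_{\lb})=[\lb]$, $Stab_{\cham^{\prime}_{-}}(J^{*}_{\lm})=[\lm]$ yields
$$\langle J_{\lb}, J^{*}_{\lm}\rangle \;=\; \int_{\msp(r)} [\lb]\cup[\lm].$$

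Now I would evaluate the right-hand side by the Atiyah--Bott localization formula on $\msp(r)$. Since $\msp(r)^{C}$ consists of isolated fixed points labeled by $r$-tuples of partitions, and the restriction of a fixed-point class satisfies $[\lb]|_{\bar{\nu}} = \delta_{\lb,\bar{\nu}}\, e(T_{\lb}\msp(r))$, the localization sum collapses to a single term:
$$\int_{\msp(r)} [\lb]\cup[\lm] \;=\; \delta_{\lb,\lm}\, e(T_{\lb}\msp(r)).$$
It then remains to identify $e(T_{\lb}\msp(r))$ with $E_{\lb,\lb}$. This is the standard ADHM/Nakajima computation of the tangent character at a torus fixed point: $T_{\lb}\msp(r)$ decomposes as a direct sum indexed by pairs $(i,j)\in\{1,\dots,r\}^{2}$, and the Euler class of the $(i,j)$-summand is exactly the factor $e_{\lambda_{i},\lambda_{j}}(u_{i}-u_{j})$ from (\ref{efor}), so $e(T_{\lb}\msp(r)) = E_{\lb,\lb}$.

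The only genuinely non-routine step is invoking the adjointness of the two stable envelopes; everything else is Atiyah--Bott localization together with a well-known character computation. The main subtlety to double-check is the precise sign and normalization that enter the adjointness identity for the opposite chambers $\cham_{\pm}\supset\cham^{\prime}_{\pm}$ fixed in (\ref{chambers}), in particular its compatibility with the choice of polarization implicit in the definition of $Stab_{\cham_{\pm}}$. The asserted off-diagonal vanishing $E_{\lb,\lm}=0$ for $\lb\neq\lm$ then emerges directly from the Kronecker factor produced by localization, consistent with what is visible from the explicit product formula (\ref{efor}).
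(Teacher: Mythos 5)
Your proposal is correct and follows essentially the same route as the paper: both arguments rest on the defining properties of the stable envelope for the two opposite chambers (which is exactly the adjointness $\int_{\msp(r)}Stab_{\cham^{\prime}_{+}}(\alpha)\cup Stab_{\cham^{\prime}_{-}}(\beta)=\int_{\msp^{A}(r)}\alpha\cup\beta$ you invoke, with the support and degree axioms killing off-diagonal terms), followed by Atiyah--Bott localization and the standard ADHM computation of the tangent character $e(T\msp(r,n))|_{\lb}=\prod_{i,j}e_{\lambda_i,\lambda_j}(u_i-u_j)$. You merely make explicit the adjointness step (and the polarization/sign caveat) that the paper compresses into ``the defining properties of the stable map,'' so there is nothing substantive to change.
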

\begin{proof}
The proof follows from the defining properties of stable map (Appendix B, theorem \ref{stabth} ). Let us compute the integral (\ref{intef}) using Atiyah-Bott localization to fixed points. If $\lb=\lm$ then by the first property, the only point that contribute is $\lb$. By the second property the answer is given by the Euler class of the tangent bundle $T \msp(r,n)$ evaluated by $\lb$. Calculation gives:
\be
\left.e(T \msp(r,n))\right|_{\lb}=\prod\limits_{i,j=1}^{r} \, e_{\lambda_{i},\mu_{j}}(u_i-u_j)
\ee
with $e_{\lambda,\mu}$ as in the theorem. If $\lb\neq\lm$ then by the first and the third property in theorem \ref{stabth}, the residue at each point is of degree less than $ \dim  \msp(r,n) $ and therefore is zero.
\end{proof}

\begin{proposition}
We have analog of Cauchy identity:
\be
\label{cauchy1}
\sum\limits_{\lb}\, \left( \dfrac{ J_{\lb}(p_{n}^{(i)}) J^{*}_{\lb}(q_{n}^{(i)})}{ E_{\lb,\lb}} \right)=
\exp\Big(-\frac{t_2}{t_1} \sum\limits_{n=1}^{\infty} \, \frac{p^{(1)}_n q^{(1)}_n+...+p^{(r)}_n q^{(r)}_n}{n} \Big)
\ee
\end{proposition}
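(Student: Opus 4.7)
The plan is to identify the left-hand side of (\ref{cauchy1}) as the reproducing kernel of the bilinear pairing (\ref{spdif}) and to evaluate that kernel in the power-sum basis, where it reduces to the classical Cauchy identity. By the preceding proposition, $\{J_{\lb}\}$ and $\{J^{*}_{\lb}/E_{\lb,\lb}\}$ are dual bases with respect to $\langle\,,\,\rangle$; for any dual pair $\{A_\alpha\}$, $\{B_\alpha\}$ the sum $\sum_\alpha A_\alpha(p)B_\alpha(q)$ is independent of the basis chosen, so the sum on the left of (\ref{cauchy1}) coincides with the corresponding sum built from any other dual pair.

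I would next introduce a convenient dual pair coming from the power-sum basis. For an $r$-tuple of partitions $\lm=[\mu_1,\dots,\mu_r]$ put $p_{\lm}=\prod_{i=1}^{r}\prod_{j}p^{(i)}_{(\mu_i)_j}$. The Nakajima isomorphism $H^{\bullet}_C(\msp^A(r))\simeq \textsf{F}^{\otimes r}$ together with formula (\ref{spdif}) makes the pairing factor over the color index, yielding
$$
\langle p_{\lm},p_{\lnu}\rangle = \delta_{\lm,\lnu}\,\bar{\mathfrak z}(\lm),\qquad \bar{\mathfrak z}(\lm)=\prod_{i=1}^{r}(-t_1/t_2)^{\ell(\mu_i)} z_{\mu_i},
$$
with $z_\mu=\prod_k k^{m_k}m_k!$ the usual combinatorial factor ($m_k$ the number of parts of $\mu$ equal to $k$). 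Hence $\{p_{\lm}\}$ and $\{p_{\lm}/\bar{\mathfrak z}(\lm)\}$ form dual bases.

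Combining these observations,
$$
\sum_{\lb}\frac{J_{\lb}(p^{(i)}_n)\,J^{*}_{\lb}(q^{(i)}_n)}{E_{\lb,\lb}}=\prod_{i=1}^{r}\sum_{\mu}\frac{p^{(i)}_{\mu}\,q^{(i)}_{\mu}}{(-t_1/t_2)^{\ell(\mu)}z_{\mu}},
$$
where $p^{(i)}_\mu=\prod_j p^{(i)}_{\mu_j}$ and similarly for $q^{(i)}_\mu$. Each factor on the right is exactly the expansion of $\exp(-t_2/t_1\sum_n p^{(i)}_n q^{(i)}_n/n)$, obtained by writing the exponential as a product over $n$ and regrouping monomials by partition. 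Taking the product over $i$ and combining the exponentials produces the right-hand side of (\ref{cauchy1}).

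The two structural statements used above are the basis-independence of the reproducing kernel and the factorization of (\ref{spdif}) over colors; these are the only points needing attention. The first is routine once one restricts to the finite-dimensional graded pieces under $\deg(p^{(i)}_k)=k$, on which the pairing is nondegenerate and a standard linear-algebra argument applies. The second follows from the product decomposition $\msp^A(r)=\coprod_{n_1+\dots+n_r=n}Hilb_{n_1}\times\cdots\times Hilb_{n_r}$ and the corresponding tensor-product structure of the Nakajima Fock space mentioned earlier in this section.
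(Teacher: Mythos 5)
Your proof is correct and rests on the same core idea as the paper's: by the duality $\langle J_{\lb},J^{*}_{\lm}\rangle=\delta_{\lb,\lm}E_{\lb,\lb}$, the left-hand side is the reproducing kernel of the pairing (\ref{spdif}), and it only remains to identify the right-hand side as that same kernel. The paper does this last step more directly---under the substitution $q_n\mapsto -nt_1/t_2\,\partial_{q_n}$ the exponential becomes the Taylor-expansion operator $\exp\big(\sum_n p_n\partial_{q_n}\big)$, which visibly reproduces any $v$---whereas you expand the kernel in the self-dual power-sum basis and invoke the classical Cauchy expansion; both routes are valid, and your color factorization and norm computation $\langle p_\mu,p_\mu\rangle=(-t_1/t_2)^{\ell(\mu)}z_\mu$ check out against (\ref{spdif}).
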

\begin{proof}
Denote the left side of this identity by
$$
Id=\sum\limits_{\lb}\,  \dfrac{ J_{\lb} J^{*}_{\lb} }{ E_{\lb,\lb} }
$$
as $e_\lb=\prod\limits_{i,j=1}^{r} \, e_{\lambda_{i},\lambda_{j}}(u_i-u_j)$ is the norm of $J_{\lb}$, this is an identity operator in the sense that:
$$
\langle Id, v \rangle=\sum\limits_{\lb}\,  \dfrac{ J_{\lb}}{ E_{\lb,\lb}} \, \langle J^{*}_{\lb} , v \rangle=v
$$
for any class $v$. Therefore, by (\ref{spdif}) its enough to note that after substitution $q_n = \partial_{q_n}$ the left side acts as identity on any polynomial:
$$
\left.\exp\Big(\sum\limits_{n=1}^{\infty} \, p^{(1)}_n \partial{q^{(1)}_n}+...+p^{(r)}_n \partial{q^{(r)}_n} \Big) v(q_1,...,q_r)\right|_{q_i=0}=v(p_1,...,p_r)
$$
\end{proof}

\begin{proposition}
We have the symmetry of $u$-characters:
\be
\label{sym1}
J^{u_1,..,u_r}_{\lambda_1,...,\lambda_r}(p_n^{(1)},...,p_n^{(r)})= (-1)^{|\lb|}
J^{* {u_r,..,u_1} }_{\lambda_r,...,\lambda_1}(p_n^{(r)},...,p_n^{(1)})
\ee
and similar symmetry of $t$-characters:
\be
\label{sym2}
J^{t_1,t_2}_{\lambda_1,...,\lambda_r}(p_n^{(i)})=J^{\, t_2,t_1}_{\lambda^{\prime}_1,...,\lambda_r^{\prime}}(p_n^{(i)} t_2/t_1) \ \ \ J^{* t_1,t_2}_{\lambda_1,...,\lambda_r}(p_n^{(i)})=J^{* \, t_2,t_1}_{\lambda^{\prime}_1,...,\lambda_r^{\prime}}(p_n^{(i)} t_2/t_1)
\ee
where $\lambda^{\prime}$ denotes the transposed diagram.
\end{proposition}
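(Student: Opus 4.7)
The plan is to derive both symmetries from geometric involutions of $\msp(r,n)$ that normalize the torus $C$. The general strategy is: exhibit such an involution, track how it acts on chambers, on $C$-fixed points, and on the Nakajima Fock-space generators $p_n^{(i)}$; then apply the $G$-equivariance of the Okounkov-Maulik stable map and translate the resulting identity in $H^\bullet_C(\msp(r))$ back to a polynomial identity using the definitions of $J_{\lb}$ and $J^*_{\lb}$.

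For (\ref{sym1}) I would take $\sigma$ to be the longest element of the framing Weyl group $S_r$, acting on $\msp(r,n)$ by reversing the order of the summands in the framing (\ref{frm}). Then $\sigma$ normalizes $C$: it fixes $t_1,t_2$, sends $u_i\mapsto u_{r+1-i}$, swaps the $A$-chambers $\cham^{\,\prime}_+\leftrightarrow \cham^{\,\prime}_-$, and takes the $C$-fixed point $[\lambda_1,\ldots,\lambda_r]$ to $[\lambda_r,\ldots,\lambda_1]$. By Weyl equivariance of the stable map one has $\sigma^{*}\circ Stab_{\cham_+}=Stab_{\cham_-}\circ \sigma^{*}$. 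Applying both sides to $J_{\lb}$ and comparing with $Stab_{\cham_-}(J^*_{\sigma\lb})=[\sigma\lb]$, together with the reversal of the Fock variables $p_n^{(i)}\mapsto p_n^{(r+1-i)}$ induced by $\sigma$, yields (\ref{sym1}) up to a scalar. That scalar is identified as $(-1)^{|\lb|}$ by comparing the leading terms of $Stab_{\cham_+}$ and $Stab_{\cham_-}$ at the fixed component of $\lb$: the two polarization halves, which govern the leading terms of the stable envelope, differ by exactly $|\lb|$ sign flips.

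For (\ref{sym2}) I would use the involution $\tau:\msp(r,n)\to\msp(r,n)$ coming from the coordinate swap $(x,y)\mapsto(y,x)$ on $\BC^2$. It acts trivially on $A$ and exchanges $t_1\leftrightarrow t_2$ on $C/A$; in particular it preserves each chamber $\cham_\pm$, so $\tau^{*}\circ Stab_{\cham_\pm}=Stab_{\cham_\pm}\circ\tau^{*}$. On $C$-fixed points $\tau$ acts as $\lb\mapsto \lb'$, transposing each component partition. The nontrivial input is the induced action on the Nakajima generators $p_n^{(i)}$: the Nakajima creation operators are normalized using a factor of $t_1$, so exchanging $t_1\leftrightarrow t_2$ produces the rescaling $p_n^{(i)}\mapsto (t_2/t_1)\,p_n^{(i)}$. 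Combining $\tau^{*}[\lb]=[\lb']$ with this rescaling and with the commutativity $\tau^{*}\circ Stab_{\cham_\pm}=Stab_{\cham_\pm}\circ\tau^{*}$ then yields both identities in (\ref{sym2}).

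The main technical difficulty will be pinning down the prefactors: the sign $(-1)^{|\lb|}$ in (\ref{sym1}) and the rescaling $p_n^{(i)}\mapsto (t_2/t_1)\,p_n^{(i)}$ in (\ref{sym2}). The former is a careful polarization count in the Okounkov-Maulik construction, best done by evaluating $Stab_{\cham_+}|_{\lb}$ and $Stab_{\cham_-}|_{\lb}$ on the $A$-fixed component containing $\lb$ and matching them box-by-box; the latter is a direct read-off from the explicit Nakajima formula for the Heisenberg creation/annihilation operators. Once these conventions are settled, the rest of the argument is a formal application of $G$-equivariance of the stable map together with the defining property of $J_{\lb}$ and $J^*_{\lb}$.
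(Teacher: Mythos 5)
This is essentially the paper's own argument: the paper likewise proves (\ref{sym1}) by noting that swapping $\cham^{\,\prime}_{+}\leftrightarrow\cham^{\,\prime}_{-}$ amounts to relabeling $(u_1,\dots,u_r)\leftrightarrow(u_r,\dots,u_1)$ together with the order of the fixed components of $\msp^{A}(r)$ (hence the variables $p_n^{(i)}$), and proves (\ref{sym2}) by noting that $t_1\leftrightarrow t_2$ reverses the box ordering, i.e.\ transposes all diagrams, so your explicit involutions plus equivariance of the stable map are just this observation made precise. You are in fact more careful than the paper about the prefactors (the paper does not derive $(-1)^{|\lb|}$ or the $t_2/t_1$ rescaling at all); just note that your ``exactly $|\lb|$ sign flips'' is not the naive half-rank of the normal bundle to the $A$-fixed component containing $\lb$, which is $|\lb|(r-1)$, so that sign count must be done with the fixed polarization conventions of the Maulik--Okounkov construction rather than by counting normal directions.
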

\begin{proof}
For (\ref{sym1}) enough to note that the changing the chamber $ \cham^{\prime}_{+} \leftrightarrow \cham^{\prime}_{-} $  is the same as change of the order of equivariant parameters $(u_1,..,u_r)\leftrightarrow (u_r,..,u_1)$ and change of the order of fixed components of $\msp^{A}(r)$
 which correspond to $(p_n^{(1)},...,p_n^{(r)})\leftrightarrow (p_n^{(r)},...,p_n^{(1)})$. Similarly for (\ref{sym1}), the substitution $t_1 \leftrightarrow t_2$ changes the order on boxes corresponding to transposition of all Young diagrams $\lambda_i \leftrightarrow \lambda^{\prime}$.
\end{proof}

\begin{proposition}
We have the following Cauchy identity:
\be
\prod\limits_{k=1}^{r} \prod\limits_{i=1}^{n_{k}} \prod\limits_{j=1}^{m_{k}}\,(1-x_i^{(k)} y_i^{(k)} ) =\sum\limits_{\lb}\, \left( \dfrac{ J^{t_1,t_2}_{\lb}(x_{n}^{(i)}) J^{*,t_2,t_1}_{\lb^{\prime}}(y_{n}^{(i)})}{ \prod\limits_{i,j=1}^{r} \, e_{\lambda_{i},\lambda_{j}}(u_i-u_j)} \right)
\ee
where we  used change of variables $p_k^{(i)}=\sum_m (x^{(i)}_{m})^{k}$ and $\lb^{\prime}=[\lambda_{1}^{\prime},...,\lambda_{r}^{\prime}]$.
\end{proposition}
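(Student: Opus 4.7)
My plan is to derive the stated identity as an immediate consequence of the previous Cauchy identity (\ref{cauchy1}), combined with the transposition symmetry (\ref{sym2}). No additional geometric input is needed: the entire argument reduces to a rescaling of the dual variables, an application of (\ref{sym2}), and the passage to the Newton power-sum specialization. I expect this to be essentially bookkeeping of the prefactors $t_1,t_2$.

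First, in (\ref{cauchy1}) I substitute $q_n^{(i)}\mapsto (t_1/t_2)\,y_n^{(i)}$. The outer factor $-t_2/t_1$ in the exponent exactly cancels the newly introduced $t_1/t_2$, so the right-hand side simplifies to $\exp\!\big(-\sum_{n\geq 1}\tfrac{1}{n}\sum_{k}p_n^{(k)}y_n^{(k)}\big)$. On the left, the argument of $J^{*}_{\lb}$ becomes $(t_1/t_2)\,y_n^{(i)}$, and by (\ref{sym2}),
$$
J^{*,\,t_1,t_2}_{\lb}\big((t_1/t_2)\,y_n^{(i)}\big) \;=\; J^{*,\,t_2,t_1}_{\lb'}\big(y_n^{(i)}\big).
$$
Thus (\ref{cauchy1}) takes the form $\sum_{\lb} J^{t_1,t_2}_{\lb}(p_n^{(i)})\, J^{*,\,t_2,t_1}_{\lb'}(y_n^{(i)})/E_{\lb,\lb}$ on the left.

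Next, I specialize the power sums by $p_n^{(k)}=\sum_{l=1}^{n_k}(x_l^{(k)})^n$ and $y_n^{(k)}=\sum_{j=1}^{m_k}(y_j^{(k)})^n$. Using $-\log(1-z)=\sum_{n\geq 1}z^n/n$, the exponent on the right evaluates to $\sum_{k,l,j}\log(1-x_l^{(k)}y_j^{(k)})$, so the right-hand side becomes $\prod_{k=1}^{r}\prod_{l=1}^{n_k}\prod_{j=1}^{m_k}(1-x_l^{(k)}y_j^{(k)})$, which is the product appearing in the statement. The normalization $E_{\lb,\lb}=\prod_{i,j}e_{\lambda_i,\lambda_j}(u_i-u_j)$ is unaffected by the substitutions and matches the denominator on the right-hand side verbatim; in particular no reindexing of the summation variable $\lb$ is required.

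The only step requiring real care is matching the $t_1/t_2$-rescaling of the argument of $J^{*}$ to the precise form of (\ref{sym2}); this is where the interplay between the scalar product (\ref{spdif}) and (\ref{cauchy1}) must be tracked, but no deeper obstacle appears, since $J_{\lb}$ and $J^{*}_{\lb}$ are already established as dual bases and the power-sum specialization merely converts the formal Cauchy kernel into the advertised rational product.
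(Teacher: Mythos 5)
Your proof is correct and follows the same route as the paper: substitute $q_n^{(i)}\mapsto (t_1/t_2)\,q_n^{(i)}$ in (\ref{cauchy1}), apply the transposition symmetry (\ref{sym2}), and specialize the power sums to obtain the product $\prod(1-x^{(k)}_i y^{(k)}_j)$ via $-\log(1-z)=\sum_{n\geq 1} z^n/n$. You simply spell out the bookkeeping that the paper compresses into one sentence.
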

\begin{proof}
Substituting $q_n^{(i)}=t_1/t_2 q_n^{(i)}$ in (\ref{cauchy1}) and applying (\ref{sym2}) we obtain the result.
\end{proof}

The standard Jack polynomials degenerate to Schur polynomials at $t_1+t_2=0$ similarly we for generalized polynomials we have
\begin{proposition}
Let $t_1+t_2=0$ then the generalized Jack polynomials degenerate to a product of Schur polynomials:
$$
\left.J_{\lb}(p_{n}^{i})\right|_{t_1+t_2=0}=(-1)^{|\lb|} \dfrac{\prod\limits_{1\leq i<j<\leq r}\, e_{\lambda_{i},\lambda_{j}}(u_i-u_j)}{\prod\limits_{\Box \in \lb} \,\textrm{ hook}(\Box)} s_{\lambda_{1}}(p_{n}^{(1)})...s_{\lambda_{r}}(p_{n}^{(r)})
$$
\end{proposition}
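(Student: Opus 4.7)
The approach combines a direct restriction computation at the fixed point $\bar\lambda$ with (a) the rank-one degeneration $j_\lambda|_{\hbar=0}=(-1)^{|\lambda|}s_\lambda/\prod_{\Box}\textrm{hook}(\Box)$, and (b) a vanishing statement for the off-component contributions of $J_{\bar\lambda}$ at $\hbar=0$.

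First I would compute $J_{\bar\lambda}\big|_{\bar\lambda}$ from the defining relation $Stab_{\cham_{+}}(J_{\bar\lambda})=[\bar\lambda]$. Let $F=\prod_{k=1}^{r}Hilb_{|\lambda_k|}\subset\msp^A(r)$ be the connected $A$-fixed component containing $\bar\lambda$, and decompose the normal bundle $N_{F/\msp(r)}=N_F^{+}\oplus N_F^{-}$ with respect to the chamber $\cham_{+}$. The stable envelope axiom $Stab_{\cham_{+}}(\alpha)|_{F}=e(N_F^{\mp})\cdot\alpha$ (with the MO sign convention matching the paper), combined with the factorization $e(T_{\bar\lambda}\msp(r))=e(T_{\bar\lambda}F)\,e(N_F^{+})|_{\bar\lambda}\,e(N_F^{-})|_{\bar\lambda}$ and further restriction to $\bar\lambda$, identifies the coefficient of $\prod_{k}j_{\lambda_k}(p^{(k)})$ in the Fock expansion of the $F$-component of $J_{\bar\lambda}$ as the product over off-diagonal pairs of the Euler factors $e_{\lambda_i,\lambda_j}(u_i-u_j)$. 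This reproduces the $\prod_{i<j}e_{\lambda_i,\lambda_j}(u_i-u_j)$ prefactor of the claim after accounting for the $\hbar=0$ identity $e_{\mu,\lambda}(-u)|_{\hbar=0}=(-1)^{|\lambda|+|\mu|}e_{\lambda,\mu}(u)|_{\hbar=0}$.

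The central step is then to show that at $\hbar=0$ the polynomial $J_{\bar\lambda}$ collapses onto this single diagonal term. Writing $J_{\bar\lambda}=\sum_{\bar\mu:|\bar\mu|=|\bar\lambda|}c_{\bar\mu}\prod_k j_{\mu_k}(p^{(k)})$, the within-$F$ off-diagonal coefficients $c_{\bar\mu}$, $\bar\mu\in F\setminus\{\bar\lambda\}$, vanish at every value of $\hbar$ by restricting the defining equation to $\bar\mu$: since $[\bar\lambda]|_{\bar\mu}=0$ and $Stab_{\cham_{+}}$ restricts diagonally on fixed points inside $F$, one reads off $c_{\bar\mu}=0$. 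For $\bar\mu$ lying in a different $A$-component $F'\neq F$, the equation $Stab_{\cham_{+}}(J_{\bar\lambda})|_{\bar\mu}=0$ involves cross-component matrix elements of the stable envelope, which by the support and degree conditions in the MO characterization of $Stab_{\cham_{+}}$ are polynomials divisible by $\hbar=t_1+t_2$ (reflecting the equivariant weight of the symplectic form on $\msp(r)$); hence they vanish at $\hbar=0$ and force $c_{\bar\mu}|_{\hbar=0}=0$. This $\hbar$-divisibility of the cross-component contributions is the main technical obstacle, and is the only place where $\hbar=0$ is essentially used.

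Finally, applying the rank-one degeneration $j_\lambda|_{\hbar=0}=(-1)^{|\lambda|}s_\lambda/\prod_{\Box}\textrm{hook}(\Box)$ (which is the $r=1$ case of the proposition, verifiable via Nakajima's identification of $j_\lambda$ with the class of the fixed point and direct comparison of $e_{\lambda,\lambda}(0)|_{\hbar=0}$ with the Hall norm of $s_\lambda$) to each factor $j_{\lambda_k}(p^{(k)})$ and combining with Step 1 yields
\[
\left.J_{\bar\lambda}(p_n^{(i)})\right|_{t_1+t_2=0}=(-1)^{|\bar\lambda|}\,\dfrac{\prod_{1\leq i<j\leq r}e_{\lambda_i,\lambda_j}(u_i-u_j)}{\prod_{\Box\in\bar\lambda}\textrm{hook}(\Box)}\,s_{\lambda_1}(p_n^{(1)})\cdots s_{\lambda_r}(p_n^{(r)}),
\]
with the overall sign $(-1)^{|\bar\lambda|}$ arising as the product of the sign in the rank-one degeneration and the parity picked up when converting $\prod_{i>j}e_{\lambda_i,\lambda_j}(u_i-u_j)|_{\hbar=0}$ to $\prod_{i<j}e_{\lambda_i,\lambda_j}(u_i-u_j)|_{\hbar=0}$.
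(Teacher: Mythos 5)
Your argument is correct and follows essentially the same route as the paper: both rest on the stable map becoming diagonal at $\hbar=t_1+t_2=0$ (the divisibility by $\hbar$ of off-diagonal restrictions of stable envelopes, a Maulik--Okounkov property that the paper likewise invokes without detailed proof), with the diagonal entry given by the Euler class of half the normal bundle to the fixed component and the hook-length factor coming from the $r=1$ degeneration of Jack polynomials to Schur polynomials. Your version is simply more explicit about the within-component versus cross-component bookkeeping and about the signs.
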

\begin{proof}
At $t_1+t_2=0$ the stable maps is diagonal, i.e. it sends classes of the fixed points to the fixed points modulo a multiple.  Thus, classes $J_{\lb}$ coincide with classes of fixed points on $\msp^{A}$ given by  $s_{\lambda_1}...s_{\lambda_{r}}$ at $t_1+t_2=0$. The diagonal elements of the stable map are given by the Euler class of positive half of the normal bundle to the fixed component. Calculation of this character gives the coefficient.
\end{proof}
The standard Jack polynomials may be defined as common eigenvectors of some infinite family of commuting hamiltonians, known as
the  trigonometric Calogero-Moser integrable system. The hamiltonian of this system has the form:
$$
\textsf{H}^{(r)}=\sum\limits_{m,n=1}^{\infty}\, t_1 \alpha_{-n-m}^{(r)} \alpha_{n}^{(r)} \alpha_{m}^{(r)}-t_2 \alpha_{-n}^{(r)} \alpha_{-m}^{(r)} \alpha_{m+n}^{(r)}+\sum\limits_{n=1}^{\infty}\, (u_r+ \hbar (n-1)/2) \alpha^{(r)}_{-n}  \alpha^{(r)}_{n}
$$
where $\alpha^{(r)}_{n}$ are the generators of Heisenberg algebra:
$$
\alpha^{(r)}_{n}= \left\{ \begin{array}{ll} p^{(r)}_{n} & n<0\\
 n \partial_{p^{(r)}_n} & n>0\\
 0 & n=0  \end{array}  \right.
$$
Let us consider the hamiltonian describing $r$ interacting Calogero-Moser systems:
\be
\label{ham}
\hat{\textsf{H}}= \sum\limits_{i=1}^{r}\, \textsf{H}^{(i)}+ \sum\limits_{1 \leq i<j\leq r}\, \textsf{H}^{(i,j)}
\ee
with interaction term given by:
$$
\textsf{H}^{(i,j)}=\hbar \sum\limits_{k=1}^{\infty}\, (-1)^{k (i-j)} k \alpha_{-k}^{(j)} \alpha_{k}^{(i)}
$$

\begin{proposition} \label{efun}
The eigenvectors of hamiltonian (\ref{ham}) are given by generalized Jack polynomials with the following eigenvalues:
$$
\hat{\textsf{H}} \, J_{\lb} =\Big( \sum\limits_{\Box\in \lb}\, \varphi^{\lb}_{\Box} \Big)\, J_{\lb}
$$
\end{proposition}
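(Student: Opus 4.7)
The plan is to realize $\hat{\textsf{H}}$ as the transport, through the stable envelope $\textrm{Stab}_{\cham_+}$, of a multiplication operator on $H^{\bullet}_C(\msp(r))$ whose eigenvalues on fixed-point classes are manifestly $\sum_{\Box \in \lb} \varphi_\Box^{\lb}$. The natural candidate is $\Phi = \cup\, c_1(\CF)$, where $\CF$ is the universal rank-$r$ torsion-free sheaf of the moduli problem (pushed down to $\msp(r)$ from $\msp(r)\times\BP^2$). At a torus-fixed point $\lb$ the fibre of $\CF$ decomposes, as a $C$-module, into one-dimensional weight spaces of weights $\varphi_\Box^{\lb}$, $\Box \in \lb$, so by Atiyah--Bott the operator $\Phi$ is diagonal in the fixed-point basis with exactly the desired eigenvalues. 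Since $J_{\lb} = \textrm{Stab}_{\cham_+}^{-1}[\lb]$, it is enough to show $\textrm{Stab}_{\cham_+}^{-1}\circ\Phi\circ\textrm{Stab}_{\cham_+} = \hat{\textsf{H}}$ as operators on $\textsf{F}^{\otimes r}$.

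First I would handle the restriction of $\Phi$ to the $A$-fixed locus $\msp^A(r)\simeq\coprod Hilb_{n_1}\times\cdots\times Hilb_{n_r}$. There the universal sheaf splits as $\bigoplus_i \CF_i$ and $c_1(\CF)|_{\msp^A(r)} = \sum_i\bigl(c_1(\CF_i) + u_i\, N_i\bigr)$, with $N_i$ the number operator on the $i$-th factor. By Lehn's theorem, cup product by $c_1(\CF_i)$ on $H^{\bullet}(Hilb_{n_i})$ corresponds, under Nakajima's identification with Fock space, to the cubic trigonometric Calogero--Moser Hamiltonian; combined with the framing piece $u_i N_i$ and Lehn's normal-ordering constant $\hbar(n-1)/2$ this reproduces precisely $\textsf{H}^{(i)}$. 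Hence the diagonal part of $\textrm{Stab}_{\cham_+}^{-1}\Phi\,\textrm{Stab}_{\cham_+}$ equals $\sum_i\textsf{H}^{(i)}$.

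Next I would compute the off-diagonal correction. The stable envelope $\textrm{Stab}_{\cham_+}$ is triangular in the chamber order $u_1 \gg \cdots \gg u_r$ and differs from the identity on $H^{\bullet}_C(\msp^A(r))$ by contributions of the off-diagonal tangent directions $\Hom(J_i,J_j)$, $i<j$. Expanding $\textrm{Stab}_{\cham_+}$ to first order along the normal bundle to $\msp^A(r)$, or equivalently extracting the bilinear term from the factorization $R(u)=\textrm{Stab}_{\cham_+}^{-1}\textrm{Stab}_{\cham_-}$ of the Maulik--Okounkov $R$-matrix \cite{Maulik:2012wi}, one finds that the off-diagonal piece is bilinear in the Heisenberg generators $\alpha^{(i)},\alpha^{(j)}$ and contributes exactly $\sum_{i<j}\textsf{H}^{(i,j)}$ with the coefficient $(-1)^{k(i-j)}\hbar k$ in front of $\alpha^{(j)}_{-k}\alpha^{(i)}_k$. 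Together with Step~2 this identifies $\textrm{Stab}_{\cham_+}^{-1}\Phi\,\textrm{Stab}_{\cham_+}$ with $\hat{\textsf{H}}$, and the proposition follows from the diagonal action of $\Phi$ on $[\lb]$.

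The main obstacle is the last step: the precise numerical factors, in particular the sign $(-1)^{k(i-j)}$ and the relative normalization by $\hbar$, are sensitive to polarization and chamber conventions and are not automatic from the triangularity of $\textrm{Stab}_{\cham_+}$ alone. The cleanest way to pin them down is to use the abelianized stable envelope developed in Section~\ref{absec} (the same machinery that proves Theorem~\ref{thmone}): this reduces the off-diagonal computation to an explicit combinatorial identity, summed over boxes of $\lb$, in which the sign pattern $(-1)^{k(i-j)}$ appears naturally as the parity of the number of walls of $\cham_+$ crossed between the $i$-th and $j$-th Fock factors.
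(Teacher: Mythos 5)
Your proposal follows essentially the same route as the paper: both rest on identifying $\hat{\textsf{H}}$, under the $Stab_{\cham_+}$ identification, with cup product by $c_1$ of the tautological bundle on $\msp(r)$, which is automatically diagonal in the fixed-point basis with eigenvalues $\sum_{\Box\in\lb}\varphi^{\lb}_{\Box}$ by Atiyah--Bott. The paper disposes of that identification in one line by citing \cite{Maulik:2012wi,Smirnov:2013hh}, so the Lehn-type computation of the diagonal part and the off-diagonal coefficient analysis you sketch --- whose normalization you rightly flag as the delicate point --- is precisely the content the paper outsources to those references rather than proving.
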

\begin{proof}
Under identification (\ref{defn}) (for chamber $\cham_{+}$) the operator (\ref{ham}) corresponds to operator of multiplication by
the first Chern class $c_{1}( \CT )$ tautological bundle $\CT$ over $\msp(r)$ \cite{Maulik:2012wi,Smirnov:2013hh}. In equivariant multiplication by $c_{k}(\CT)$ is always diagonal in the basis of fixed points, with eigenvalues given by $e_{k}(x_1,..,x_k)$ there $e_k$ is $k$-th elementary symmetric function and $x_i$ - Chern roots of $\CT$.
In our case $x_i=\varphi^{\lb}_{\Box_i}$ what gives the eigenvalue.
\end{proof}

\section{ Abelianization of instanton moduli space  \label{absec} }
The aim of this section is to proof the theorem \ref{thmone}. As a byproduct we will also obtain
inverse formulas given by theorem \ref{invth}. We start with a short outline on geometry of hypertoric varieties. The fuller treatment
may be found in \cite{Shenf1,Proud1} and references therein.
\subsection{Hypertoric varieties \label{hvar}}
Let us consider a torus $T^{m}=(\matC^{*})^m$ with a canonical symplectic action on $T^{*}\matC^m$. Let $\ft^m=Lie(T^{m})$ be its Lie algebra and
$\mu_m: T^{*}\matC^m \rightarrow (\ft^m)^{*}$ be the moment map given explicitly by:
$$
\mu_m(z,w)=(z_1w_1,...,z_m w_m)
$$
Let $T^{k}\subset T^{m}$ be algebraic torus. Denote the quotient by $T^{d}=T^m/T^k$ ($d=m-k$) and by $\ft^k$,~$\ft^d$ the corresponding Lie algebras.
We have exact sequence:
$$
0\rightarrow \ft^k \stackrel{i}{\rightarrow} \ft^m \stackrel{j}{\rightarrow} \ft^d \rightarrow 0
$$
and its dual:
$$
0\rightarrow (\ft^d)^{*} \stackrel{j^*}{\rightarrow} (\ft^m)^{*} \stackrel{i^*}{\rightarrow} (\ft^k)^{*} \rightarrow 0
$$
Denote the moment map for $T^k$ action by $\mu_k= i^{\ast} \mu_{m}$. Fix a character $\theta$ of $T^k$, and define the \textit{hypertoric variety} as the following GIT quotient:
\be
\msp=\mu_{k}^{-1}(0)/\!\!/_{\theta}T^k
\ee
The action of $T^m$ on $T^{*}\matC^m$ induces symplectic action of $T^d$ on $\msp$.  We will need, however, an action of a bigger torus on $\msp$:
$\textbf{G}=T^d\times \matC^{*}$. The action of $\matC^{*}$ is induced from its action on $T^{*}\matC^m$ by dilating the fibers. This action scales the the canonical symplectic form $\omega$ on $T^{*}\matC^m$. We denote by $\hbar$ the corresponding character of $\matC \omega$.

Geometry of hypertoric varieties is encoded in terms of hyperplane arrangements. Let us denote $(\ft^{k}_{\matZ})^{*}$ and by $(\ft^{k}_{\matR})^{*}=(\ft^{k}_{\matZ})^{*} \otimes \matR$ the integral and real part of coalgebras (similarly for $\ft^m$ and $\ft^d$).
Let us denote by $e_{i}$ generators of $\ft^m_{\matZ}$ and by $a_i=j(e_{i})$ their images in $\ft^d_{\matZ}$. Let $\hat{\theta}$ be some lift of $\theta$
to $(\ft^m)^{*}$ with coordinates $\hat{\theta}_k$.
The \textit{hyperplane arrangement} is defined as a collection $\fA=\{H_1,...,H_m\}$ of hyperplanes in $(\ft^{d}_{\matR})^{*}$ defined by the equations:
$$
H_{k}=\{ x\in (\ft^{d}_{\matR})^{*}: \langle x,a_k\rangle+\hat{\theta}_k=0 \}
$$
considered with orientation. The orientation means that each hyperplanes divides $(\ft^{d}_{\matR})^{*}$ into a positive and negative half-spaces:
$$
\matH_{k}^{\pm}=\{ x\in (\ft^{d}_{\matR})^{*}: \pm (\langle x,a_k\rangle+\theta_k)>0 \}
$$

The $\textbf{G}$-equivariant cohomology of the hypertoric variety $\msp$ have the following simple description in terms of hyperplane arrangement.
Given a plane $H_{i}$, consider the corresponding character $e_i^{*} \in (\ft^{m})^{*}$. Let $T^{*} \matC^{m} \times \matC$ be the trivial equivariant bundle with the action of $T^m$ in the fiber defined  by $e_i^{*}$. This induces the $\textbf{G}$-equivariant line bundle $L_{i}$ on $\msp$.
Note that there is one to one correspondence between basis elements $e_i\in \ft^{m}$  and the line bundles $L_{i}$. To abuse the notations we denote by the same symbol $e_{i}=e(L_i)$ the Euler class of these line bundles. By construction this character corresponds to the divisor $z_i=0$. Similarly, the dual divisor $w_i=0$ corresponds to the class $\hbar -u_i$ (the shift by $\hbar$ is due to scaling action of $\textbf{G}$ on symplectic form).
\begin{theorem} (\cite{Proud2})
\label{Pth}
The group $H_{\textbf{G}}^{\bullet}( \msp )$ is generated by $(e_1,...,e_m, \hbar)$ with the relations given by:
$$
\prod_{i \in S } e_{i}^{* h_i}=0
$$
for each circuit of hyperplanes $S$.
\end{theorem}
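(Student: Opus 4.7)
The plan is to prove the result by combining Kirwan surjectivity with an explicit identification of the kernel in terms of the combinatorics of unstable strata. The key observation is that $T^k$ acts freely on $\mu_k^{-1}(0)^{ss}$, so that
$$
H^{\bullet}_{\textbf{G}}(\msp) \simeq H^{\bullet}_{T^m \times \matC^*}(\mu_k^{-1}(0)^{ss}).
$$
Since $T^* \matC^m$ is equivariantly contractible, $H^{\bullet}_{T^m \times \matC^*}(T^* \matC^m) = \matC[e_1, \dots, e_m, \hbar]$, and the generators listed in the theorem arise as the pullbacks of these classes along the inclusion $\mu_k^{-1}(0)^{ss} \hookrightarrow T^* \matC^m$. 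The theorem therefore amounts to identifying the kernel and image of this restriction map.

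For surjectivity, I would apply Kirwan's theorem via a Morse-theoretic argument on the norm-square $|\mu_k|^2$ on $T^*\matC^m$: its downward Morse stratification has $\mu_k^{-1}(0)^{ss}$ as the top (open) stratum, and the normal bundles to the other strata have equivariantly non-vanishing Euler classes. The Atiyah--Bott long exact sequence, together with induction on the stratification, shows the restriction $\matC[e_1,\dots,e_m,\hbar] \twoheadrightarrow H^{\bullet}_{\textbf{G}}(\msp)$ is surjective.

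Next I would compute the kernel by stratifying the unstable locus of $\mu_k^{-1}(0)$ via the Hilbert--Mumford criterion. A point $(z,w)$ of $\mu_k^{-1}(0)$ fails to be $\theta$-semistable precisely when one can find a $1$-parameter subgroup of $T^k$ that simultaneously scales down a fixed collection of coordinate functions, which by a linear algebra computation translates into the existence of a non-trivial signed linear dependence $\sum_{i \in S^+} a_i - \sum_{i \in S^-} a_i = 0$ among the images $a_i = j(e_i)$. Writing $S = S^+ \sqcup S^-$ and letting $h_i = +1$ on $S^+$, $h_i = -1$ on $S^-$, and $h_i = 0$ elsewhere, the minimal such $S$ are exactly the (signed) circuits of $\fA$, and the corresponding unstable component is cut out by $\{z_i = 0\}_{i \in S^+} \cup \{w_i = 0\}_{i \in S^-}$ with equivariant Thom class $\prod_{i \in S} e_i^{* h_i}$. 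These classes therefore lie in the kernel of the restriction map.

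The main obstacle is showing that the circuit relations \emph{generate} the entire kernel. My plan here is a Poincaré polynomial comparison: the Morse stratification of $|\mu_k|^2$ is equivariantly perfect, so $\dim H^{k}_{\textbf{G}}(\msp)$ can be expressed as an alternating sum indexed by flats of $\fA$; on the algebraic side, the Hilbert series of $\matC[e_1,\dots,e_m,\hbar]$ modulo the circuit ideal is controlled by the broken-circuit / $h$-vector combinatorics of the matroid of $\fA$ (an Orlik--Solomon / Stanley--Reisner style computation). Matching these two Hilbert series forces the surjection $\matC[e_1,\dots,e_m,\hbar]/(\text{circuits}) \twoheadrightarrow H^{\bullet}_{\textbf{G}}(\msp)$ to be an isomorphism. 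The combinatorial matroid-theoretic equality of these two generating functions is the technical heart of the argument and the step I would expect to require the most care.
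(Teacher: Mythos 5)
The paper does not prove this statement at all: it is quoted verbatim from the hypertoric literature (the citation is to Proudfoot; the result goes back to Konno and Hausel--Sturmfels, with the $\hbar$-deformed equivariant version due to Harada--Proudfoot). So there is no ``paper proof'' to compare against; your proposal has to be judged against the standard argument, and in outline it reproduces it: reduce to $H^{\bullet}_{T^m\times\matC^*}$ of the semistable locus using freeness of the $T^k$-action, get generators from equivariant contractibility upstairs, exhibit the circuit classes as Thom/Euler classes of coordinate subspaces that miss the semistable locus (a circuit $S=S^+\sqcup S^-$ corresponds to $\{z_i=0\}_{i\in S^+}\cap\{w_i=0\}_{i\in S^-}$, whose class is $\prod_{i\in S^+}e_i\prod_{i\in S^-}(\hbar-e_i)=\prod_{i\in S}e_i^{*h_i}$), and then match Hilbert series against the matroid $h$-vector to show the circuit ideal is the whole kernel. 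You have correctly located the technical heart in the last step; that dimension count is exactly how Konno and Hausel--Sturmfels close the argument.

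Two points need repair. First, your surjectivity step is misstated: $\mu_k^{-1}(0)^{ss}$ is not an open stratum of $T^*\matC^m$, since $\mu_k^{-1}(0)$ has positive codimension. The object you should stratify is $\mu_k^{-1}(0)$ itself (or the level set of the full hyperk\"ahler moment map); $\mu_k^{-1}(0)$ is a cone, hence equivariantly contractible, which is what actually produces the polynomial ring $\matC[e_1,\dots,e_m,\hbar]$ upstairs. Second, because $\mu_k^{-1}(0)$ is singular, the assertion that the Morse stratification of the norm-square is equivariantly perfect is not a routine application of Atiyah--Bott; it is precisely the content of hyperk\"ahler Kirwan surjectivity, which in the abelian case is a theorem of Konno (and Hausel--Proudfoot) requiring its own argument. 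As a roadmap your proposal is sound and faithful to the known proof, but these two steps are where the real work lives and cannot be waved through as stated.
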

Above, the circuit $S$ is defined as a minimal collection of half-spaces $S=\{ \matH_{k}^{\pm} \}$ such that their intersection is trivial:
$$
\bigcap \limits_{i \in S} \, \matH_{i}^{\pm} =\emptyset
$$
and for $\matH_{i}^{\pm}$ we define a vector $h_{i}=\pm 1$. We also use Shenfeld's notations (\ref{Shenfnot}).

\subsection{Stable basis in hypertoric case}
Given a subset of hyperplanes $\fB \subset \fA$ with nonempty intersection, we consider the hypertoric subvariety $\msp_{\fB} \subset \msp$. The subvariety $\msp_{\fB}$ is defined by
its hyperplane arrangement: let us consider the intersections of all hyperplanes in  $\fB$. This space is not trivial by definition. The intersection of this space with complementary hyperplanes $\fA \setminus \fB$ defines the hyperplanes arrangement in it, and the last defines $\msp_{\fB}$.

Let $T_{\fB} \subset T^{d}$ is a subtorus generated by normals $a_{i}$ to hyperplanes $H_{i}$  for $i\in \fB$. The action of this torus fixes the subvariety $\msp_{\fB}\subset \msp^{T_{\fB}}$ and preserves symplectic form. Given a chamber $\cham \subset \ft_{\fB} =Lie(T_{\fB})$, we have a stable map:
$$
Stab_{\cham}: \, H^{\bullet}_{\textbf{G}}\Big( \msp_{\fB} \Big) \rightarrow H^{\bullet}_{\textbf{G}}\Big( \msp \Big)
$$
The stable envelopes have a nice description in terms of canonical classes $e_i$, we have:
\begin{theorem} (\cite{Shenf1})
For a class  $\gamma\in  H^{\bullet}_{\textbf{G}}\Big( \msp_{\fB} \Big)$ we have
$$
Stab_{\cham}(\gamma) = \gamma \, \prod\limits_{i\in \fB} \, e_{i}^{* \langle \alpha_i,\,\cham\rangle}
$$
where $\alpha_i \in (\ft_{\fB})^{*}$ is the basis dual to $a_i$.
\end{theorem}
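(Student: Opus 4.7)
The stable envelope $Stab_\cham(\gamma)$ is uniquely determined by the three Maulik--Okounkov axioms: (i) its support lies in the attracting set of $\msp_\fB$ with respect to $\cham$, (ii) its restriction to $\msp_\fB$ is $\gamma$ times the Euler class of a distinguished half of the normal bundle $N=N_{\msp_\fB/\msp}$, and (iii) its restriction to any other $T_\fB$-fixed component satisfies a strict degree bound. The plan is to check that the explicit class $\gamma\prod_{i\in\fB} e_i^{*\langle\alpha_i,\cham\rangle}$ satisfies all three and then invoke uniqueness.

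The first step is to decompose $N$ under the $T_\fB$-action. In the hypertoric setup, $N$ splits as a direct sum, indexed by $i\in\fB$, of symplectic pairs of line bundles: on one side a summand of $T_\fB$-weight $\alpha_i$ and Euler class $e_i|_{\msp_\fB}$, on the other its symplectic partner of weight $-\alpha_i$ and Euler class $\hbar-e_i$. The half of $N$ singled out by the chamber $\cham$ consists precisely of those summands whose weight has a prescribed sign on $\cham$. Shenfeld's notation $(x)^{*a}$ from \eqref{Shenfnot} is engineered so that $e_i^{*\langle\alpha_i,\cham\rangle}$ is the Euler class of the correct summand in the $i$-th symplectic pair; the product over $i\in\fB$ thus gives the Euler class of the selected half of $N$, which is exactly the statement of axiom (ii).

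For axiom (i), I would use Theorem~\ref{Pth}: the class $\prod_{i\in\fB} e_i^{*\langle\alpha_i,\cham\rangle}$ vanishes whenever the collection of oriented half-spaces $\matH_i^{\pm}$ corresponding to the chosen signs contains a circuit, by the Proudfoot relation $\prod_{i\in S} e_i^{*h_i}=0$. A convex-geometric argument on the arrangement $\fA$ then identifies the non-vanishing locus with the closure of the attracting set of $\msp_\fB$ with respect to $\cham$. For axiom (iii), restricting to a different $T_\fB$-fixed hypertoric subvariety $\msp_{\fB'}$ specializes some of the factors $e_i$ to pure equivariant parameters via the Proudfoot presentation while keeping only those indexed by $i\in\fB\cap\fB'$ as honest Euler classes of normal directions at $\msp_{\fB'}$; a direct combinatorial count on $\fA$ shows that the resulting class has strictly lower polynomial degree than $e(N_{\msp_{\fB'}/\msp,-})$, as required by the stable envelope axiom.

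The main obstacle is the geometric interpretation in axiom (i): translating the algebraic support encoded by the Proudfoot circuit relations into the genuine geometric support condition for the attracting set is the combinatorial crux of Shenfeld's argument, most naturally phrased in the language of the oriented matroid of $\fA$ and its flats. Once this translation is established, axioms (ii) and (iii) reduce to straightforward Euler-class and degree bookkeeping, and uniqueness of the stable envelope concludes the identification $Stab_\cham(\gamma)=\gamma\prod_{i\in\fB}e_i^{*\langle\alpha_i,\cham\rangle}$.
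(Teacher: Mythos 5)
The paper does not prove this statement: it is quoted verbatim from Shenfeld's thesis \cite{Shenf1}, so there is no in-paper argument to compare yours against. Your strategy --- verify the three defining properties of the stable envelope from Appendix~A for the explicit candidate class and invoke uniqueness --- is the right one, and it is essentially the strategy of the cited reference. Your treatment of the diagonal restriction is correct: the normal bundle of $\msp_{\fB}$ splits into symplectic pairs of line bundles indexed by $i\in\fB$ with Euler classes $e_i$ and $\hbar-e_i$ and $T_{\fB}$-weights $\pm\alpha_i$, so the product $\prod_{i\in\fB}e_i^{*\langle\alpha_i,\cham\rangle}$ is the Euler class of the half of the normal bundle selected by $\cham$, matching the second axiom up to the (formal) choice of polarization.

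The weak point is your handling of the support axiom. That axiom is a genuine cycle-theoretic condition, $\textrm{supp}(\Gamma)\subset\textrm{Slope}_{\cham}(\msp_{\fB})$, whereas the circuit relations of Theorem~\ref{Pth} only give you vanishing of certain products in the cohomology ring, i.e.\ control over restrictions to fixed loci; vanishing of restrictions is a consequence of the support condition, not a substitute for it, and the uniqueness theorem is stated in terms of the former. The cleaner route --- and the one already set up in the paper's Section~3.1 --- is to use that $e_i$ is the class of the divisor $z_i=0$ and $\hbar-e_i$ the class of the dual divisor $w_i=0$. The product $\gamma\prod_{i\in\fB}e_i^{*\langle\alpha_i,\cham\rangle}$ is then the class of an explicit coordinate subvariety (an intersection of these divisors), and identifying that subvariety with the closure of the attracting set of $\msp_{\fB}$ is an elementary statement about the $T_{\fB}$-action on coordinates, read off from the hyperplane arrangement $\fA$. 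With the support established this way, the degree bound at the other fixed components follows from the same divisor description (off-diagonal restrictions acquire strictly fewer non-constant equivariant factors than $\textrm{codim}/2$), and uniqueness finishes the proof as you say. So the architecture of your plan is sound, but as written the first axiom is not actually verified.
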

By  $\langle \alpha_i,\,\cham\rangle$ we mean the sign of $\alpha_i$ on a chamber $\cham$ i.e. the sign of $\langle \alpha_i,\,\sigma\rangle$ for any cocharacter $\sigma \in \cham$. For our purposes, it will be convenient to rewrite this formula in the other form. First, note that a pairing $ \langle \alpha_i,\,\sigma\rangle$ is the same as considering the restriction of a character to the torus  $T_{\fB}$.
Indeed, the inclusion $T_{\fB}\subset T$ induces the map on coalgebras:
$$
H^{2}_{T}(\msp)  \stackrel{\kappa}{\longrightarrow}  H^{2}_{T_{\fB}}(\msp)\simeq \ft_{\fB}^{*}
$$
then, $ \langle \alpha_k,\,\sigma\rangle=  \langle \kappa(e_k),\,\sigma\rangle$. Moreover, if $\alpha_k$ is such that $k\not \in \fB$, then
$\langle i(e_k),\,\sigma\rangle=0$. Therefore, it is convenient to rewrite the above formula in the form of a product over \textit{all} hyperplanes in the arrangement:
\be
\label{stb}
Stab_{\cham}(\gamma) = \gamma \, \prod\limits_{i=1}^{m} \, e_{i}^{* \langle \kappa(e_i),\,\cham\rangle}
\ee

\subsection{Abelianization of hyperk\"{a}hler quotient}

%The main theorem of (numver??) ??  gives explicit relation between the stable envelopes for hyperkahler quotients and their abelianizations.
%In this section remind main definitions and necessary for the formulation of this theorem. Much more detailed duscussion can be foind in ??.

Assume the reductive group $G$ acts on $V=\matC^m$. It induces the hamiltonian action on $T^{*}V$ and let $\mu_{G}$ be the corresponding moment map.
For a fixed character $\theta \in  \fg^{*} $ we consider the hyperk\"{a}hler quotient:
$$
\msp=\mu_{G}^{-1}(0)/\!\!/_{\theta} G
$$
Let $T\subset G$ be maximal torus and $\ft =Lie(T)$ its Lie algebra. Let $\pi_{T}: \fg^{*} \rightarrow \ft^{*}$ be a projection map on coalgebras, and
$\mu_{T}=\pi_{T} \circ \mu_{G}$. The abelianization of $\msp$ is the following hypertoric variety ( the quotient of $T$ -semistable points ):
$$
\bar{\msp}=\mu_{T}^{-1}(0)/\!\!/_{\theta} T
$$
Assume that a torus $A$ has a right action on $T^{*} V $, commuting with the action of $G$. This induces the action of $A$ on $\msp$ and its abelianization $\bar{\msp}$. We assume that $\msp^{A}$ consist of isolated points.
The fixed point $\lambda \in \msp^{A}$ defines the stabilizing map $\Phi_{\lambda}: A \rightarrow G$, defined by:
$$\Phi_{\lambda}(a) [\lambda] = [\lambda] a$$
for any representative $[\lambda]\in T^{*} V$.

Fix some maximal torus $T\subset G$ such that $\Phi_{\lambda}(A) \subset T $. Let $\Delta$ be the root system and $W$ be the Weyl group corresponding to
choice of $T$. We also fix some split $\Delta=\Delta^+ \cup \Delta^-$ to a positive and negative roots.

Denote by $G_{\lambda}\subset G$, $\Delta_{\lambda} \subset \Delta$, $W_{\lambda}\subset W$ the stabilizer of $\Phi_{\lambda}(A)$, its root system and Weyl group respectively. We denote $\Delta^{\pm}_{\lambda}=\Delta_\lambda \cap \Delta^{\pm}$.

Let $\fL(\lambda) \in \bar{\msp}^{A}$ be some lift of $\lambda$ to the abelianization. Note, that if $x\in \fL(\lambda)$ is some point in the lift then,
the point $w  x$  for $w\in W$ is also in $\fL(\lambda)$. Moreover, $x$ and $w x$ are in different components of $\fL(\lambda)$ if $w\not\in W_\lambda$. In general:
$$
\fL(\lambda) \simeq G_{\lambda}/T\times W/W_\lambda
$$
Therefore, the lift of the fixed points may consist of maximally $|W|$ isolated points. If the stabilizer subgroup is not trivial, i.e. $G_{\lambda} \neq T$, then
the number of component drops to $|W/W_\lambda|$ but some of the components can have larger dimensions.

Let us denote by $ \fL( w {\lambda}) \in \bar{\msp}^{A}$  $w\in |W/W_\lambda|$ one of fixed components. Then we have the stabilizing map:
$$
\bar{\Phi}_{w,\lambda}: A \rightarrow T
$$
and we denote by $\bar{\Phi}^{*}_{w, \lambda}: \ft^{*} \rightarrow \fa^{*}$ the corresponding map on coalgebras.

Let us fix a chamber $\cham \subset \fa$. We say that  $\fL (w {\lambda})$ is the \textit{dominant lift } of a fixed point $\lambda$ with respect to $\cham$
if $\langle\bar{\Phi}^{*}_{w, \lambda}(\alpha),\cham \rangle>0$ for  all roots $\alpha \in \Delta^{+}\setminus \Delta^{+}_{\lambda}$.

\begin{theorem}(\cite{Shenf1})
\label{Shenth}
Let $\msp$ and $\bar{\msp}$ be some hyperk\"{a}hler quotient and its abelianization as above, with a symplectic action of torus $A$.
For a fixed chamber in the Lie algebra $\cham\subset \fa$ denote:
$$
Stab^{G}_{\cham} : H^{\bullet}_{\textbf{G}} ( \msp^{A}  ) \longrightarrow H^{\bullet}_{\textbf{G}} ( \msp ), \ \ \ Stab^{T}_{\cham} : H^{\bullet}_{\textbf{G}} ( \bar{\msp}^{A}  ) \longrightarrow H^{\bullet}_{\textbf{G}} ( \bar{\msp} )
$$
the corresponding stable maps.  Let $\gamma, \delta \in \msp^{A}$ be two fixed points, and $\fL(v \gamma)$ be the dominant lift of the first point. Then we have the following relation among stable maps:
\be
\label{shrf}
\left.Stab^{G}_{\cham}(\gamma)\right|_{\delta}=\sum\limits_{w \in W/W_{\delta}}\, \dfrac{\left.Stab^{T}_{\cham}(  \fL(v \gamma)  )\right|_{ \fL (w \delta) }}{\prod\limits^{ \ \ }_{\alpha\in \Delta^{+}\setminus \Delta^{+}_{\delta}}\bar{\Phi}^{*}_{w,\delta} (\alpha) (\bar{\Phi}^{*}_{w,\delta} (\alpha)+\hbar)}
\ee
\end{theorem}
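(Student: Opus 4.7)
The plan is to verify that the right-hand side of (\ref{shrf}), viewed as a collection of restrictions indexed by the $A$-fixed points $\delta\in\msp^A$, assembles into a class in $H^\bullet_{\textbf{G}}(\msp)$ satisfying the three defining axioms of $Stab^G_{\cham}(\gamma)$: support on the $\cham$-attracting set of $\gamma$, the diagonal normalization by the Euler class of the negative half of the normal bundle at $\gamma$, and the strict degree drop off-diagonal. These axioms determine the stable envelope uniquely after localization, so the theorem reduces to checking them.

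The first step is to exhibit the sum over $w\in W/W_{\delta}$ as the Kirwan-type pushforward converting classes on $\bar\msp$ to classes on $\msp$. Concretely, the lift $\fL(w\delta)\subset\bar\msp^A$ carries a normal bundle whose weights differ from those of the normal bundle of $\delta\subset\msp^A$ precisely by the roots $\alpha\in\Delta^{+}\setminus\Delta^{+}_{\delta}$ paired against $\bar\Phi^{*}_{w,\delta}$, together with their $\hbar$-shifted symplectic duals. The denominator in (\ref{shrf}) is exactly the Euler class of this difference, and summing over the $W/W_{\delta}$-orbit yields a $W$-invariant expression that pulls back from $H^\bullet_{\textbf{G}}(\msp)$ via the natural embedding into $H^\bullet_{\textbf{G}}(\bar\msp)^{W}$ valid after localization. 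Hence the right-hand side is at least a well-defined class on $\msp$ restricted at $\delta$.

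Next I would verify the three stable envelope axioms. Support follows because $Stab^{T}_{\cham}(\fL(v\gamma))$ is supported on the $\cham$-attracting set of $\fL(v\gamma)$ in $\bar\msp$, whose image under the reduction $\bar\msp\dashrightarrow\msp$ lies in the attracting set of $\gamma$. For the diagonal normalization at $\delta=\gamma$, I would single out the term $w=v$ (the other terms $w\notin W_{\gamma}$ vanish by dominance of the lift, since $\fL(w\gamma)$ then lies outside the $\cham$-attracting set of $\fL(v\gamma)$), apply the explicit hypertoric formula (\ref{stb}) to compute $Stab^{T}_{\cham}(\fL(v\gamma))|_{\fL(v\gamma)}$ as the product of the line-bundle characters weighted by their signs on $\cham$, and observe that dividing by $\prod_{\alpha\in\Delta^{+}\setminus\Delta^{+}_{\gamma}}\bar\Phi^{*}_{v,\gamma}(\alpha)\bigl(\bar\Phi^{*}_{v,\gamma}(\alpha)+\hbar\bigr)$ discards exactly the tangent directions of $\bar\msp$ killed in the $G$-reduction, leaving the Euler class of the negative tangent bundle of $\msp$ at $\gamma$.

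The main obstacle will be the off-diagonal degree bound. For $\delta\neq\gamma$ one knows the $A$-degree of each $Stab^{T}_{\cham}(\fL(v\gamma))|_{\fL(w\delta)}$ is strictly less than the corresponding normalization on the abelianization, but after dividing by the root-product and summing over $w\in W/W_{\delta}$ one must recover the sharper degree bound required by $Stab^{G}_{\cham}$ at $\delta$. The delicacy is that individual terms in the sum can develop apparent poles along the roots, and regularity together with the correct degree emerges only for the symmetric combination. Establishing this requires a careful bookkeeping of tangent weights of $\bar\msp$ at $\fL(w\delta)$ against those of $\msp$ at $\delta$ via the short exact sequence of normal bundles under abelianization, and a term-by-term comparison of the $\cham$-signs of roots in $\Delta^{+}\setminus\Delta^{+}_{\delta}$. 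Once this inequality is verified, uniqueness of the stable envelope yields (\ref{shrf}).
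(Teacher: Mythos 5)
Note first that the paper contains no proof of this statement: Theorem \ref{Shenth} is imported verbatim from Shenfeld's thesis \cite{Shenf1}, so the only thing to compare against is Shenfeld's own argument. Your outline does follow that strategy in broad strokes (realize the right-hand side as the push-through of the abelian stable envelope of the dominant lift, with the denominator the Euler class of the root directions $\alpha$, $\alpha+\hbar$ lost in passing from $\bar{\msp}$ to $\msp$; obtain the sum over $W/W_{\delta}$ from the lifts of $\delta$; then characterize the result by the support/normalization/degree axioms and invoke uniqueness). So the plan is sound, but as written it is a plan rather than a proof, and the gaps sit exactly where the theorem has content.

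Concretely: (1) In the diagonal step you discard the terms with $w\notin W_{\gamma}$ ``since $\fL(w\gamma)$ lies outside the $\cham$-attracting set of $\fL(v\gamma)$.'' Dominance does not give this. The support axiom only confines $Stab^{T}_{\cham}(\fL(v\gamma))$ to the full slope $\textrm{Slope}_{\cham}(\fL(v\gamma))$, i.e.\ the leaves of \emph{all} components below $\fL(v\gamma)$ in the partial order, and different lifts of the same fixed point are in general comparable in that order, so those restrictions need not vanish. Showing that they do not spoil the required equality $\left.\Gamma\right|_{\gamma}=\pm\, e(N_{-})\cup\gamma$ (they must vanish or cancel, and this is precisely where the sign condition $\langle\bar{\Phi}^{*}_{v,\gamma}(\alpha),\cham\rangle>0$ defining dominance has to be used, through the explicit hypertoric formula (\ref{stb})) is a genuine argument you have not supplied. (2) You explicitly leave open the off-diagonal degree bound, and even the prior issue that the Weyl-symmetrized sum is regular along the root hyperplanes (each summand has apparent poles at $\bar{\Phi}^{*}_{w,\delta}(\alpha)=0$); since your whole proof is ``check the three axioms, then uniqueness,'' an unchecked third axiom and a half-checked second axiom mean the conclusion is not reached. (3) The claim that the $W$-invariant expression automatically ``pulls back from $H^{\bullet}_{\textbf{G}}(\msp)$'' is itself the Kirwan/Martin-type comparison through the intermediate quotient $\mu_{G}^{-1}(0)^{ss}/T$ that has to be constructed before one even has a candidate class on $\msp$ whose fixed-point restrictions are the displayed sums; asserting it is not enough. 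Filling these three points is essentially reproducing Shenfeld's proof, which is why the paper simply cites it.
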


\subsection{Abelianization of the instanton moduli space}
The instanton moduli space is an example of hyperk\"{a}hler quotient provided by ADHM construction.
Consider the symplectic space:
$$
T^{*}V=\{ (X,Y,I,J) | X,Y\in Hom(\matC^n,\matC^n), I\in Hom(\matC^{r},\matC^n), J\in Hom(\matC^{n},\matC^{r})\}
$$
such that $(X,J)$ is in $V$ and $(Y,I)$ is the symplectic dual part. Consider the  hamiltonian action of $G=GL(n)$  on $T^{*}V$ defined explicitly together with its moment $\mu: T^{*} V \rightarrow \fg $ map by:
\be
g  \cdot (X,Y,I,J) \longrightarrow (g X g^{-1},g Y g^{-1}, g I, J g^{-1})
\ee
\be
\mu(X,Y,I,J)=[X,Y]+I J
\ee
Let $\theta=\det$ is the character of $G$, then the instanton moduli space is isomorphic to the following hyperk\"{a}hler quotient:
\be
\msp(r,m)= T^{*} V /\!\!/_{\theta} G = \mu^{-1}(0)^{\theta-ss}/G
\ee
where $\theta-ss$ denotes $\theta$-semistable points. Let $T\subset G$ is a maximal torus and $\ft=Lie(T)$. Then
we use a projection $\pi: \fg\rightarrow \ft$ to define a $T$ - moment map $\mu_{T}=\pi_{T} \circ \mu$. The abelianization of instanton moduli space is the following $T$ - quotient:
$$
\bar{\msp}(r,m)= T^{*} V /\!\!/_{\theta} T = \mu^{-1}_{T}(0)^{\theta-ss}/T
$$
Let $\{X_{k,m}, J_{k,l}\}$ $k,m=1..n$, $l=1..r$ is a basis of the vector space $V=(X,J)$.
To reduce notations to one in section \ref{hvar}, we denote $T^{m}\simeq (\matC^{*})^{n^2+nr}$ the torus acting on this space by dilating of coordinate vectors. This action extends uniquely to symplectic action on $T^{*}V$. Let $T^{k}=T$ (maximal torus of $GL(n)$) and $T^{d}=T^{m}/T^{k}$. Then, in the notations of section \ref{hvar} we have the following exact sequences:
$$
0\rightarrow \ft^k \stackrel{i}{\rightarrow} \ft^m \stackrel{j}{\rightarrow} \ft^d \rightarrow 0
$$
and its dual:
$$
0\rightarrow (\ft^d)^{*} \stackrel{j^*}{\rightarrow} (\ft^m)^{*} \stackrel{i^*}{\rightarrow} (\ft^k)^{*} \rightarrow 0
$$
i.e. $m=n^2+n r$, $k=n$ and $d=n^2+n(r-1)$.  By  theorem \ref{Pth} the Chern classes $X_{i,j}$ and $J_{k,m}$ generate the cohomology ring
$H^{\bullet}_{\textbf{G}}(\bar{\msp}(r,n) )$.

\subsection{Lifts of the fixed points}
Let us consider several tori acting on the hyperk\"{a}hler quotient $\bar{\msp}(r,n)$. Let $A$ be a maximal torus of $GL(r)$, acting on the elements by
$$
(X,Y,I,J)\rightarrow (X,Y,I g,g^{-1} J)
$$
Consider an action of $(\matC^{*})^2$ by
$$
(z,w) \cdot (X,Y,I,J)\rightarrow (z X, w Y,I,J)
$$
Note that the action of this torus does not fix the symplectic form, and scales it by $z w$.
Let $C=A\times (\matC^{*})^2$. Let $B\subset C$ be a codimension one subtorus fixing the symplectic form.

Thus we have $A\subset B \subset C$, and $B$ is a maximal subtorus of $C$ preserving the symplectic form. The $C$ - equivariant cohomology of
$\msp(r,n)$ is a module over:
$$
\matC[u_1,u_2,...,u_r,t_1,t_2] = H^{\bullet}_{C}( \cdot )
$$
where we denote by $u_1,...,u_r$ the equivariant parameters corresponding to characters of $A$ and $t_1,t_2$ the characters of $C/A$, such that the character of symplectic form is $\hbar=t_1+t_2$.

Let us consider a $r$-tuple of partitions corresponding to a fixed point  $\lb\in \msp^{C}(r,n)$ with $|\lb|=n$.  Let $(X_\lambda,Y_\lambda,I_\lambda,J_\lambda)\in T^{*}V$ denotes its representative. To abuse notations, we say that the vector space $\matC^{n}$ is spanned by boxes $\Box_k$, $k=1...n$. The operators $X_\lambda,Y_\lambda$ are represented by square $n\times n$ matrices indexed by boxes. Consider the fixed component of a lift $\fL(w \lb)$ corresponding to matrices
with nonzero matrix elements $(X_{\lb})_{\Box_1,\Box_2}$ only for  $\varphi_{\Box_2}^{\lb}-\varphi_{\Box_1}^{\lb}=t_1$ and $J_{k,\Box}$ with $\varphi_{\Box_2}^{\lb}=u_k$ and similarly for the dual part $(Y,i)$. Let us consider a general character $\sigma : \matC^{*}\rightarrow C $ be cocharacter defined explicitly by:
$$
\sigma : z \mapsto (z^{u_1},...,z^{u_{r}},z^{t_1},z^{t_2})
$$
To check that the specified operators represent the fixed point it is enough to note that we have a stabilizing map $\Phi_{v,\lb}: \matC^{*} \rightarrow C$ such that
\be
\label{stcon}
\Phi_{v,\lb}(z) X_{\lb}  \Phi_{\lb}^{-1}(z)= z^{t_1} X_{ \lb}, \ \ \ diag(z^{u_1},...,z^{u_r}) J_{\lb} = J_{\lb} \Phi_{v,\lb}^{-1}(z)
\ee
Where the stabilizing map is given explicitly by:
\be
\label{stmap}
\Phi_{v,\lb}(z)=diag(z^{\varphi_{\Box_1}^{\lb}},...,z^{\varphi_{\Box_n}^{\lb}})
\ee
To fix the positive and negative roots of $GL(n)$, we introduce the following natural ordering on boxes: given an $r$-tuple of partitions $\lb$, we turn every partition $\lambda_k$ by $45^{\circ}$ and place them on a line, such  that the boxes have coordinates $\rho_{\Box}^{\lambda}=u_{k}+x(\Box)t_1-y(\Box) t_1$ as in the figure \ref{ordfig}. We also assume that all characters are from one of the two chambers (\ref{chambers}), such that different partitions on the figure \ref{ordfig} "do not intersect".  Given this picture, we order boxes from left to right, and from the bottom to the top, for example in the figure \ref{ordfig} we have a 3-tuple of partitions $\lb=([4,2,1],[4,1,1],[2,2,1])$ and the number in the box correspond to the ordering.

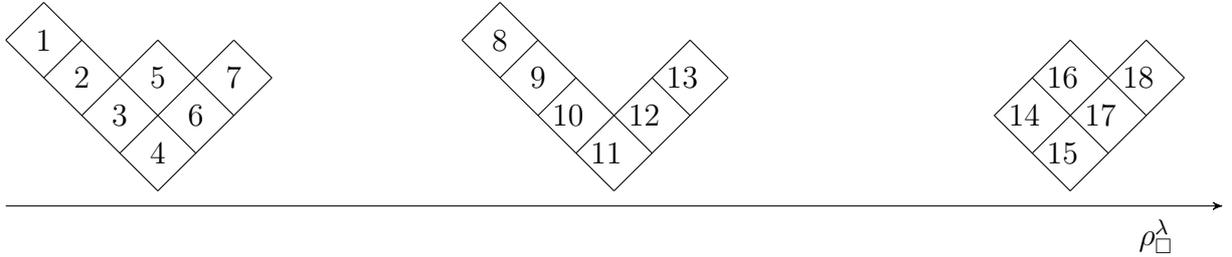
\begin{figure}[ht]
\begin{tikzpicture}
\draw [-] (-7,0) -- (-9,2);
\node [left] at (-6.75,0.5) {$4$};
\node [left] at (-6.75,1.5) {$5$};
\node [left] at (-7.75,1.5) {$2$};
\node [left] at (-5.75,1.5) {$7$};
\node [left] at (-6.25,1) {$6$};
\node [left] at (-7.25,1) {$3$};
\node [left] at (-8.25,2) {$1$};
\draw [-] (-9,2) -- (-8.5,2.5);
\draw [-] (-8,1) -- (-7,2);
\draw [-] (-8.5,1.5) -- (-8,2);
\draw [-] (-6,1) -- (-7,2);
\draw [-] (-5.5,1.5) -- (-6,2);
\draw [-] (-7.5,0.5) -- (-6,2);
\draw [-] (-6.5,0.5) -- (-8.5,2.5);
\draw [-] (-7,0) -- (-5.5,1.5);

\draw [-] (-1,0) -- (-3,2);
\draw [-] (-3,2) -- (-2.5,2.5);
\draw [-] (-2,1) -- (-1.5,1.5);
\draw [-] (-2.5,1.5) -- (-2,2);
\draw [-] (0,1) -- (-0.5,1.5);
\draw [-] (0.5,1.5) -- (0,2);
\draw [-] (-1.5,0.5) -- (0,2);
\draw [-] (-0.5,0.5) -- (-2.5,2.5);
\draw [-] (-1,0) -- (0.5,1.5);

\node [left] at (-0.75,0.5) {$11$};
\node [left] at (-1.75,1.5) {$9$};
\node [left] at (0.25,1.5) {$13$};
\node [left] at (-0.25,1) {$12$};
\node [left] at (-1.25,1) {$10$};
\node [left] at (-2.25,2) {$8$};

\draw [-] (5,0) -- (4,1);
\draw [-] (4,1) -- (5,2);
\draw [-] (6,1) -- (5,2);
\draw [-] (6.5,1.5) -- (6,2);
\draw [-] (4.5,0.5) -- (6,2);
\draw [-] (5.5,0.5) -- (4.5,1.5);
\draw [-] (5,0) -- (6.5,1.5);

\node [left] at (5.25,0.5) {$15$};
\node [left] at (5.25,1.5) {$16$};
\node [left] at (6.25,1.5) {$18$};
\node [left] at (5.75,1) {$17$};
\node [left] at (4.75,1) {$14$};

\draw [->] (-9,-0.2) -- (7,-0.2);

\node [left] at (6.5,-0.6) {$\rho^{\lambda}_{\Box}$};

\end{tikzpicture}
\caption{Box ordering for 3-partition $\lb=([4,2,1],[4,1,1],[2,2,1])$ \label{ordfig}}
\end{figure}

Denote by $h_{s}(\lb)$ the heights of the diagram $\lb$ (considered as in the figure above) defined as the number of boxes with the same coordinate:
\be
\label{hh}
h_{s}(\lb)=\# \{\Box \in \lb|:  \rho_{\Box}^{\lb}=s \}
\ee
such that if $\lb$ has only  heights $h_{s}=1,0$ then it necessarily consist of hook partitions.

Let us now consider the action of torus $B\subset C$. This torus fixes the symplectic form, thus to restrict the previous consideration to $B$ its enough to to substitute $t_{2}=-t_1$. Such that the stabilizing map (\ref{stmap})  takes the form:
\be
\label{ef}
\Phi_{v,\lb}(z)=diag(z^{\rho_{\Box_1}^{\lb}},...,z^{\rho_{\Box_n}^{\lb}})
\ee
The stabilizer $G_{\lb}\subset GL(n)$ is the subgroup commuting with (\ref{ef}), such that from (\ref{hh}) we obtain:
\be
\label{stabg}
G_{\lb}\simeq \prod\limits_{s}\, GL\Big(h_s(\lb) \Big)
\ee

The positive roots are  $\alpha_{\Box_1,\Box_2}: X\rightarrow X_{\Box_1,\Box_1}-X_{\Box_2,\Box_2}$ for $\Box_1<\Box_{2}$ with respect to the chosen ordering.
Thus the set $\Delta^{+}_{\lb}=\Delta_{\lb}\cap\Delta^{+} $ has the form:
\be
\label{psr}
\Delta^{+}_{\lb}=\{ \alpha_{\Box_1,\Box_2} | \Box_1<\Box_{2}, \rho_{\Box_1}^{\lb}=\rho_{\Box_2}^{\lb}  \}
\ee

\begin{lemma}
The specified lift $\fL(v \lb)$ is dominant with respect to chamber $\cham_{+}$:
\end{lemma}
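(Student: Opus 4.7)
The plan is to verify the dominance criterion directly from the definitions. Dualizing the stabilizing cocharacter (\ref{ef}) gives $\bar{\Phi}^{*}_{v,\lb}(\alpha_{\Box_1,\Box_2}) = \rho^{\lb}_{\Box_1} - \rho^{\lb}_{\Box_2}$ for each positive root $\alpha_{\Box_1,\Box_2}$ with $\Box_1 < \Box_2$ in the chosen box ordering. Combined with the description (\ref{psr}) of $\Delta^{+}_{\lb}$, the lemma reduces to checking that $\rho^{\lb}_{\Box_1} - \rho^{\lb}_{\Box_2}$ is strictly positive on $\cham_{+}$ whenever $\Box_1 < \Box_2$ and $\rho^{\lb}_{\Box_1} \neq \rho^{\lb}_{\Box_2}$.

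The key observation is that the two-step ordering illustrated in Figure \ref{ordfig}---``left-to-right, then bottom-to-top''---is designed precisely as a total refinement of the partial order on boxes induced by the values of $\rho^{\lb}_{\Box}$ evaluated on $\cham_{+}$. Indeed, $\rho^{\lb}_{\Box_1} - \rho^{\lb}_{\Box_2}$ has the form $(u_{k_1} - u_{k_2}) + \delta\, t_1$ with $k_i$ the partition containing $\Box_i$ and $\delta \in \matZ$; on $\cham_{+}$, where $u_i - u_{i+1} \gg t_1 > 0$, the sign of this linear form is governed first by which partition appears earlier and then (when $k_1 = k_2$) by the sign of $\delta = (x(\Box_1)-y(\Box_1)) - (x(\Box_2) - y(\Box_2))$. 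Both of these comparisons are precisely what the ``left-to-right, bottom-to-top'' rule computes.

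Ties in the ordering correspond exactly to the coincidence $\rho^{\lb}_{\Box_1} = \rho^{\lb}_{\Box_2}$, i.e.\ to the roots already inside $\Delta^{+}_{\lb}$ that are excluded from the dominance test. For every remaining pair, strict positivity on $\cham_{+}$ is then immediate, and the lift $\fL(v\lb)$ is dominant with respect to $\cham_{+}$. There is no deep obstacle: the argument is a bookkeeping identification of the pictorial conventions of Figure \ref{ordfig} with the sign conventions in (\ref{chambers}) and the chosen set $\Delta^{+}$ of positive roots of $GL(n)$. The only care required is to ensure the orientation of the $\rho^{\lb}_{\Box}$-axis and the secondary tie-breaker are consistent with the inequalities defining $\cham_{+}$.
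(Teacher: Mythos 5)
Your argument is correct and is essentially the paper's own proof: the paper likewise computes $\bar{\Phi}^{*}_{v,\lb}(\alpha_{\Box_1,\Box_2})=\rho^{\lb}_{\Box_1}-\rho^{\lb}_{\Box_2}$ and observes that this is positive on $\cham_{+}$ for every root in $\Delta^{+}\setminus\Delta^{+}_{\lb}$, the equal-$\rho$ pairs being exactly the excluded roots of $\Delta^{+}_{\lb}$. Your extra bookkeeping, checking that the left-to-right, bottom-to-top ordering of Figure \ref{ordfig} refines the order of the $\rho^{\lb}_{\Box}$-values on $\cham_{+}$, merely spells out the positivity the paper asserts in one line.
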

\begin{proof}
Indeed,
$$
\bar{\Phi}_{v,\lb}^{*}(\alpha_{\Box_1,\Box_2})=\rho_{\Box_1}^{\lb}-\rho_{\Box_2}^{\lb}
$$
and the last expression is positive on $\cham_{+}$ for $\alpha_{\Box_1,\Box_2} \in \Delta^{+}\setminus\Delta^{+}_{\lb}$.
\end{proof}
We will also need the following simple lemma:

\begin{lemma}
\label{lem1}
The restriction maps:
$$
H^{2}_{\textbf{G}}(\bar{\msp}(r,n) ) \stackrel{\kappa_{C}}{\longrightarrow} H^{2}_{C}(\bar{\msp}(r,n) )\simeq \fc^{*}, \ \ \  H^{\bullet}_{\textbf{G}}(\bar{\msp}(r,n) ) \stackrel{\kappa_{B}}{\longrightarrow} H^{\bullet}_{B}(\bar{\msp}(r,n) )\simeq\fb^{*}
$$
have the following form:
$$
\kappa_{C}(X_{\Box_1,\Box_2})=\varphi^{\lb}_{\Box_1}-\varphi^{\lb}_{\Box_2}+t_1, \ \ \ \kappa_{C}(J_{k,\Box})=u_k-\varphi^{\lb}_{\Box}
$$
$$
\kappa_{B}(X_{\Box_1,\Box_2})=\rho^{\lb}_{\Box_1}-\rho^{\lb}_{\Box_2}+t_1, \ \ \ \kappa_{B}(J_{k,\Box})=u_k-\rho^{\lb}_{\Box}
$$
\end{lemma}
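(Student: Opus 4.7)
The plan is to identify the generators $X_{\Box_1,\Box_2}$ and $J_{k,\Box}$ of $H^{\bullet}_{\textbf{G}}(\bar{\msp}(r,n))$ with equivariant first Chern classes of specific tautological line bundles on the abelianized instanton moduli space, read off their characters directly from the ADHM action, and then push the resulting $T$-weights forward to $\fc^{*}$ (resp.\ $\fb^{*}$) via the stabilising cocharacter of the fixed component $\fL(v\lb)$ given by (\ref{stmap}).

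First I would invoke Theorem \ref{Pth} applied to the ADHM hypertoric data: each generator of $H^{\bullet}_{\textbf{G}}(\bar{\msp}(r,n))$ is the $\textbf{G}$-equivariant first Chern class of the line bundle induced by a single coordinate direction of $T^{*}V$, so $X_{\Box_1,\Box_2}=e(L_{X_{\Box_1,\Box_2}})$ and $J_{k,\Box}=e(L_{J_{k,\Box}})$. The ADHM rule $X\mapsto z_1\,gXg^{-1}$ then shows that the coordinate direction $X_{\Box_1,\Box_2}$ carries weight $t_1+e_{\Box_1}-e_{\Box_2}$ in $(\ft\oplus\fc)^{*}$, and the rule $J\mapsto g_A^{-1}Jg^{-1}$ shows that $J_{k,\Box}$ carries weight $u_k-e_{\Box}$, where $e_{\Box}\in\ft^{*}$ is the basis character of the $\Box$-th diagonal entry of $T\subset GL(n)$.

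Next, at the dominant lift $\fL(v\lb)$ the stabilising cocharacter (\ref{stmap}) is $\Phi_{v,\lb}(z)=\mathrm{diag}(z^{\varphi^{\lb}_{\Box_1}},\ldots,z^{\varphi^{\lb}_{\Box_n}})$, so the induced map on coalgebras $\bar{\Phi}^{*}_{v,\lb}:\ft^{*}\to\fc^{*}$ sends $e_{\Box}\mapsto\varphi^{\lb}_{\Box}$. Restricting an equivariant Chern class to a $C$-fixed component is functorial under this pullback of characters, which gives immediately $\kappa_{C}(X_{\Box_1,\Box_2})=\varphi^{\lb}_{\Box_1}-\varphi^{\lb}_{\Box_2}+t_1$ and $\kappa_{C}(J_{k,\Box})=u_k-\varphi^{\lb}_{\Box}$. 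The $B$-equivariant formulas will then follow by the further specialisation $t_2=-t_1$ cutting out $B\subset C$, under which $\varphi^{\lb}_{\Box}=u_k+x_{\Box}t_1+y_{\Box}t_2$ collapses to $\rho^{\lb}_{\Box}=u_k+(x_{\Box}-y_{\Box})t_1$.

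The main (minor) obstacle is pinning down the sign convention linking the abstract generator $e_i$ of Theorem \ref{Pth} to the character of the associated coordinate direction of $V$. I would settle this using the consistency check (\ref{stcon}): the nonvanishing entries of $X_{\lb}$ and $J_{\lb}$ occur exactly at the weights $\varphi^{\lb}_{\Box_1}-\varphi^{\lb}_{\Box_2}+t_1=0$ and $u_k-\varphi^{\lb}_{\Box}=0$, which forces precisely the signs claimed in the lemma.
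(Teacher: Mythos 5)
Your proposal is correct and follows essentially the same route as the paper: the paper's own two-line proof derives the $\kappa_C$ formulas from the transformation/stabilization data (\ref{stcon})--(\ref{stmap}) applied to the tautological coordinate classes of Theorem \ref{Pth}, and then obtains $\kappa_B$ by restricting the $C$-characters to $B$ via $t_2=-t_1$, exactly as you do. Your write-up is simply a more detailed version of that argument, with the sign convention pinned down by the same fixed-point conditions the paper invokes.
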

\begin{proof}
The first line follows from transformation properties (\ref{stcon}). The second line is the restriction of  $C$-character to $B$-character given by $t_2=-t_1$.
\end{proof}
As a consequence we obtain:
\begin{proposition}
\be \label{we}
Stab_{\cham}\Big( \fL(v \lb)  \Big)= \prod\limits_{\Box_1,\Box_2 \in \lb }\, X_{\Box_1,\Box_2}^{* \langle \rho^{\lb}_{\Box_1}-\rho^{\lb}_{\Box_2}+t_1,\cham \rangle}
\prod \limits_{k=1}^{r} \prod\limits_{\Box \in \lb} \, J_{k,\Box}^{*\langle u_m-\rho^{\lb}_{\Box}, \cham \rangle}
\ee
\end{proposition}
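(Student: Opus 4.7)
The plan is to recognize this proposition as a direct application of the hypertoric stable envelope formula (\ref{stb}) to the abelianization $\bar{\msp}(r,n)$, whose defining symplectic space $T^*V$ comes equipped with the coordinate basis $\{X_{\Box_1,\Box_2}\} \cup \{J_{k,\Box}\}$ of $V$. Under the dictionary of section \ref{hvar}, the torus $T^m$ is the one scaling these $m = n^2 + nr$ coordinate axes, the $T$-quotient $\mu_T^{-1}(0)/\!\!/_\theta T$ is exactly $\bar{\msp}(r,n)$, and the canonical line bundles $L_i$ on the quotient are labelled by the basis vectors themselves, with Euler classes equal to the generators in Theorem \ref{Pth}.

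I then view $\fL(v\lb)$ as a fixed component of the $A$-action on $\bar{\msp}(r,n)$ and apply formula (\ref{stb}) with $\gamma$ equal to its fundamental class (that is, to $1$ in the cohomology of the lift itself). The resulting product runs over all $m$ basis directions, and the Shenfeld convention $(x)^{*a}$ automatically encodes the contribution of the symplectic partners: a positive sign of the pairing keeps the character $x$, while a negative sign swaps it for $\hbar - x$, which is exactly the Euler class of the dual divisor. This is why the final formula only needs a product over the $X$- and $J$-generators, even though the arrangement nominally involves the dual coordinates $Y$ and $I$ as well.

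To evaluate $\langle \kappa(e_i), \cham \rangle$ for each basis direction, I invoke Lemma \ref{lem1}, which computes $\kappa_B(X_{\Box_1,\Box_2}) = \rho^{\lb}_{\Box_1} - \rho^{\lb}_{\Box_2} + t_1$ and $\kappa_B(J_{k,\Box}) = u_k - \rho^{\lb}_{\Box}$. Since the chamber $\cham$ lies in $\fa \subset \fb$, pairing these $\fb^*$-characters with $\cham$ is well-defined and produces exactly the signs $\langle \rho^{\lb}_{\Box_1} - \rho^{\lb}_{\Box_2} + t_1, \cham \rangle$ and $\langle u_k - \rho^{\lb}_{\Box}, \cham \rangle$ appearing in the proposition. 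Substituting separates the product into the two advertised factors; the diagonal terms $X_{\Box,\Box}$ have character $t_1$ which restricts to $0$ on $\fa$, so by the $(\cdot)^{*0} = 1$ clause of Shenfeld's convention they are trivially $1$ and the unrestricted range over $(\Box_1,\Box_2)$ is harmless.

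The main conceptual step, rather than an obstacle, is the initial matching of the hyperplane-arrangement bookkeeping of section \ref{hvar} with the ADHM data: once $\bar{\msp}(r,n)$ is identified as the hypertoric variety associated to the coordinate arrangement of $V = (X,J)$, and once one observes that the lift $\fL(v\lb)$ plays the role of a torus-fixed subvariety in this arrangement, the proposition follows by pure substitution of Lemma \ref{lem1} into (\ref{stb}).
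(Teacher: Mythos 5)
Your overall route is the same as the paper's: identify $\bar{\msp}(r,n)$ as the hypertoric variety attached to the coordinate arrangement of $V=(X,J)$, apply the abelian stable-envelope formula (\ref{stb}) with $\gamma$ the class of the lift, and read off the signs from the restriction map of Lemma \ref{lem1}. However, there is a genuine error in where you place the chamber. You assert that $\cham$ lies in $\fa\subset\fb$ and, correspondingly, treat $\fL(v\lb)$ as a fixed component of the $A$-action. In this proposition the relevant torus is $B$, the maximal subtorus of $C$ preserving the symplectic form (this is exactly why the paper passes to $B$, sets $t_2=-t_1$, and uses $\kappa_B$), and the chamber is one of $\cham_{\pm}\subset\fb$ from (\ref{chambers}); the subalgebra $\fa$ is only a wall of these chambers, so no chamber can lie inside it. Pairing the $\kappa_B$-characters against a cocharacter in $\fa$ would kill precisely the $t_1$-information that the signs $\langle \rho^{\lb}_{\Box_1}-\rho^{\lb}_{\Box_2}+t_1,\cham\rangle$ are meant to record.

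This mislocation has a concrete wrong consequence in your last step: you conclude that the diagonal factors $X_{\Box,\Box}$, of character $t_1$, contribute $1$ by the $(\cdot)^{*0}$ clause. On $\cham_{\pm}$ the sign of $t_1$ is $\pm1$, so these factors are $X_{\Box,\Box}$ (resp.\ $\hbar-X_{\Box,\Box}$), and the same applies to every pair of boxes lying in the same partition, whose character $\rho^{\lb}_{\Box_1}-\rho^{\lb}_{\Box_2}+t_1$ is a pure multiple of $t_1$ with no $u$-part. These are exactly the factors that carry all of the content at $r=1$: with your reading the stable envelope of a point in the abelianized Hilbert scheme would collapse to $1$, contradicting Shenfeld's rank-one computation and, downstream (via Theorem \ref{rstthm} and Theorem \ref{thmone}), the Appendix B examples, where for instance the diagonal factor is what produces $J_{[1]}=t_2\, s_{[1]}$. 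The repair is simply to evaluate all signs on $\cham_{\pm}\subset\fb$, where $u_i-u_{i+1}\gg t_1>0$ (resp.\ $<0$), using the $B$-weights $\kappa_B(X_{\Box_1,\Box_2})$ and $\kappa_B(J_{k,\Box})$ of Lemma \ref{lem1}; then the unrestricted product over $(\Box_1,\Box_2)$, diagonal included, is exactly the asserted formula, which is the paper's one-line argument.
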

\begin{proof}
This follows from the second line of lemma \ref{lem1} and (\ref{stb}).
\end{proof}
Let $w \in W/W_{\lb}$ and $\fL(w\lnu)$ is some component of the lift. In the considered case the Weyl group is the group of permutations $W=S_n$.
The element $w$ acts on the boxes of $\lnu$ by permutations, and boxes of $w \lnu$  are ordered in a way that is obtained from canonical one (described above)
by applying permutation $w$. In this notations we have:
\begin{proposition}
\label{prth}
$$
\left.Stab_{\cham}\Big( \fL(v \lb)  \Big)\right|_{\fL(w \lnu)}= \prod\limits_{\Box_1,\Box_2 \in \lb }\, (\varphi^{w \lnu}_{\Box_1}-\varphi^{w \lnu}_{ \Box_2}+t_1)^{*\langle \rho^{\lb}_{\Box_1}-\rho^{\lb}_{\Box_2}+t_1,\cham \rangle}
\prod \limits_{k=1}^{r} \prod\limits_{\Box \in \lb} \, (u_k-\varphi^{w \lnu}_{\Box})^{*\langle u_m-\rho^{\lb}_{\Box}, \cham \rangle}
$$
\end{proposition}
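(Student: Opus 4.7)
The plan is to obtain Proposition \ref{prth} as a direct consequence of the hypertoric stable envelope formula \eqref{we} combined with Lemma \ref{lem1}. The right-hand side of \eqref{we} is a $\textbf{G}$-equivariant class on $\bar{\msp}(r,n)$ written as a monomial in the basic Chern classes $X_{\Box_1,\Box_2}$ and $J_{k,\Box}$, and the proposition asks only for its restriction to a specific $A$-fixed component $\fL(w\lnu)$.

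The first step is to observe that the exponents appearing in \eqref{we} depend solely on the \emph{source} $r$-tuple $\lb$ through the signs $\langle \rho^{\lb}_{\Box_1}-\rho^{\lb}_{\Box_2}+t_1,\cham\rangle$ and $\langle u_m-\rho^{\lb}_{\Box},\cham\rangle$; these signs are constants attached to the chosen lift $\fL(v\lb)$, not to the point where we restrict. Hence the restriction to $\fL(w\lnu)$ only affects the bases $X_{\Box_1,\Box_2}$ and $J_{k,\Box}$, while all exponents in the formula remain untouched. Next, the Shenfeld star operation $(\,\cdot\,)^{\ast a}$ of \eqref{Shenfnot} is compatible with any ring homomorphism that preserves $\hbar$, since it merely chooses between $x$, $\hbar-x$, and $1$ according to the sign of $a$; the restriction map $\kappa_C$ is precisely such a map. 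Therefore it suffices to evaluate $\kappa_C(X_{\Box_1,\Box_2})$ and $\kappa_C(J_{k,\Box})$ at the fixed component $\fL(w\lnu)$ and substitute.

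The second step is this evaluation. Lemma \ref{lem1} was stated at a generic fixed lift $\fL(v\lb)$, but its proof extracts the $C$-weight of the basis vectors $X_{\Box_1,\Box_2}$ and $J_{k,\Box}$ under the stabilizing map \eqref{stmap}. Running the same computation at the component $\fL(w\lnu)$, whose stabilizing map is obtained from \eqref{stmap} by permuting the diagonal entries according to $w$, gives
\[
\kappa_C(X_{\Box_1,\Box_2})\big|_{\fL(w\lnu)} = \varphi^{w\lnu}_{\Box_1} - \varphi^{w\lnu}_{\Box_2} + t_1, \qquad \kappa_C(J_{k,\Box})\big|_{\fL(w\lnu)} = u_k - \varphi^{w\lnu}_{\Box}.
\]
Substituting these expressions into the formula \eqref{we} and using that the star operation commutes with $\kappa_C$ yields exactly the claimed product.

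No serious obstacle is expected; the only subtlety is bookkeeping, namely making sure that the permutation $w$ enters only through the \emph{target} $\varphi^{w\lnu}$ (because the ordering of boxes of $w\lnu$ differs from the canonical one by $w$), while the indexing of products and the signs $\langle\,\cdot\,,\cham\rangle$ remain keyed to the source $\lb$. Once this distinction is respected, the proposition reduces to an algebraic substitution into \eqref{we}.
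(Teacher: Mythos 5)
Your proposal is correct and follows the same route as the paper: the paper's proof is precisely the one-line observation that one applies the restriction formula $\kappa_C$ of Lemma \ref{lem1} (at the component $\fL(w\lnu)$, with the stabilizing map permuted by $w$) to the hypertoric stable-envelope formula \eqref{we}, the exponents staying keyed to the source $\lb$. Your additional remarks about the star operation and the bookkeeping of $w$ simply make explicit what the paper leaves implicit.
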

\begin{proof}
The proof is a direct application of formula for $\kappa_{C}$ from lemma \ref{lem1} to $(\ref{we})$.
\end{proof}

With all this results we can formulate the main theorem of this section:

\begin{theorem}
\label{rstthm}
Let $B\subset C$ are as above. Let $\lb, \lnu \in \msp^{B}(r,n)$ are two fixed points. Then the restrictions of stable envelope $Stab_{\cham_\pm}(\lb) \in H^{\bullet}_{C}(\msp(r,n))$ to other fixed points is given by:
$$
\left.Stab_{\cham_+}(\lb)\right|_{\lnu}=\dfrac{1}{\frak{z} (\lb)}\,\sum\limits_{w \in S_{n}}\dfrac{\prod\limits_{\Box_1,\Box_2 \in \lb }\, (\varphi^{w \lnu}_{\Box_1}-\varphi^{w \lnu}_{ \Box_2}+t_1)^{* \langle\rho^{\lb}_{\Box_1}-\rho^{\lb}_{\Box_2}+t_1|\cham_+\rangle}
\prod \limits_{k=1}^{r} \prod\limits_{\Box \in \lb} \, (u_k-\varphi^{w \lnu}_{\Box})^{*\langle u_m-\rho^{\lb}_{\Box}|\cham_+\rangle}}
{ \prod\limits_{{\Box_1<\Box_2} \atop { \rho^{\lb}_{\Box_1} \neq  \rho^{\lb}_{\Box_1}  }}\,( \varphi^{w \lnu}_{\Box_1}- \varphi^{w\lnu}_{\Box_2}  )\,( \varphi^{w\lnu}_{\Box_1}- \varphi^{w\lnu}_{\Box_2} +\hbar ) }
$$
$$
\left.Stab_{\cham_-}(\lb)\right|_{\lnu}=\dfrac{1}{\frak{z} (\lb)}\,\sum\limits_{w \in S_{n}}\dfrac{\prod\limits_{\Box_1,\Box_2 \in \lb }\, (\varphi^{w \lnu}_{\Box_1}-\varphi^{w \lnu}_{ \Box_2}+t_1)^{* \langle\rho^{\lb}_{\Box_1}-\rho^{\lb}_{\Box_2}+t_1|\cham_-\rangle}
\prod \limits_{k=1}^{r} \prod\limits_{\Box \in \lb} \, (u_k-\varphi^{w \lnu}_{\Box})^{*\langle u_m-\rho^{\lb}_{\Box}|\cham_-\rangle}}
{ \prod\limits_{{\Box_1>\Box_2} \atop { \rho^{\lb}_{\Box_1} \neq  \rho^{\lb}_{\Box_1}  }}\,( \varphi^{w \lnu}_{\Box_1}- \varphi^{w\lnu}_{\Box_2}  )\,( \varphi^{w\lnu}_{\Box_1}- \varphi^{w\lnu}_{\Box_2} +\hbar ) }
$$
with
\be \label{fzet}  \frak{z} (\lb) = \prod\limits_{s} h_{s}(\lb)!\ee
\end{theorem}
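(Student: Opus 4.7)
The plan is to apply Shenfeld's abelianization theorem (Theorem~\ref{Shenth}) directly to the ADHM presentation $\msp(r,n) = T^{*}V /\!\!/_{\theta} GL(n)$ and its abelianization $\bar{\msp}(r,n) = T^{*}V /\!\!/_{\theta} T$. Under this identification, Shenfeld's ambient torus $\textbf{G}$ becomes our $C$ and his symplectic subtorus becomes our codimension-one subtorus $B \subset C$; the chamber inside $\fb$ is taken to be $\cham_{+}$ (respectively $\cham_{-}$) to derive the first (respectively second) formula. All the necessary ingredients have been prepared in the preceding subsections: the lemma immediately above the statement asserts that $\fL(v\lb)$ is the dominant lift of $\lb$ with respect to $\cham_{+}$; formula (\ref{psr}) identifies $\Delta^{+}_{\lb}$ as the set of positive roots $\alpha_{\Box_1,\Box_2}$ (with $\Box_1 < \Box_2$) satisfying $\rho^{\lb}_{\Box_1} = \rho^{\lb}_{\Box_2}$, so in particular $|W_{\lb}| = \prod_{s} h_{s}(\lb)! = \frak{z}(\lb)$; Proposition~\ref{prth} gives the hypertoric stable envelope restriction $\left.Stab^{T}_{\cham_{+}}(\fL(v\lb))\right|_{\fL(w\lnu)}$ in closed form; and the stabilizing map (\ref{stmap}) produces $\bar{\Phi}^{*}_{w,\lnu}(\alpha_{\Box_1,\Box_2}) = \varphi^{w\lnu}_{\Box_1} - \varphi^{w\lnu}_{\Box_2}$.

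Substituting these into Shenfeld's formula (\ref{shrf}) with $\gamma = \lb$, $\delta = \lnu$, and $\cham = \cham_{+}$, the numerator is exactly the expression of Proposition~\ref{prth}, while the denominator becomes
\begin{equation*}
\prod_{\substack{\Box_1 < \Box_2 \\ \rho^{\lb}_{\Box_1} \neq \rho^{\lb}_{\Box_2}}} \bigl(\varphi^{w\lnu}_{\Box_1} - \varphi^{w\lnu}_{\Box_2}\bigr) \bigl(\varphi^{w\lnu}_{\Box_1} - \varphi^{w\lnu}_{\Box_2} + \hbar\bigr).
\end{equation*}
Shenfeld's coset sum is then rewritten as a sum over the full symmetric group $S_{n}$ with the normalization $1/\frak{z}(\lb)$. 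The $\cham_{-}$ formula follows from the parallel argument using the opposite choice of positive roots of $GL(n)$ (now $\Box_1 > \Box_2$), for which $\fL(v\lb)$ remains dominant by the symmetric version of the preceding lemma.

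The main technical obstacle is this last step: justifying the passage from Shenfeld's coset sum to the symmetric-group sum with the correct combinatorial normalization $\frak{z}(\lb)$. This requires checking that the Shenfeld summand is invariant under the action of the stabilizer $W_{\lb}$ on box labels. The exponents $\langle \rho^{\lb}_{\Box}-\rho^{\lb}_{\Box'}+t_1 \mid \cham_{\pm}\rangle$ and $\langle u_{d}-\rho^{\lb}_{\Box} \mid \cham_{\pm}\rangle$ depend only on $\lb$ and are manifestly $W_{\lb}$-invariant; the delicate point is that the base factors $\varphi^{w\lnu}_{\Box_1} - \varphi^{w\lnu}_{\Box_2}$ and the matching denominator terms must reassemble consistently under the $W_{\lb}$-action, which follows from the structure of $\Delta^{+}_{\lb}$ together with the box ordering fixed in figure~\ref{ordfig}.
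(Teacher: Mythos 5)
Your proposal follows essentially the same route as the paper's own proof: a direct application of Shenfeld's formula (\ref{shrf}) to the ADHM quotient and its abelianization, feeding in the dominance lemma, the description (\ref{psr}) of $\Delta^{+}_{\lb}$, the stabilizing map (\ref{stmap}), and Proposition \ref{prth} for the numerator, then converting the coset sum to a sum over $S_{n}$ with the factor $1/\frak{z}(\lb)=1/|W_{\lb}|$ and obtaining the $\cham_{-}$ case by reversing the box order. The one step you single out as delicate --- the $W_{\lb}$-symmetry of the summand that justifies replacing the coset sum by $\frac{1}{|W_{\lb}|}\sum_{w\in S_n}$ --- is exactly the point the paper also asserts without further elaboration, so your argument matches the paper's level of detail.
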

\begin{proof}
First, we prove the theorem for $Stab_{\cham_+}$. The proof is a direct application of Shenfeld's theorem \ref{Shenth}.
From the stabilizing map (\ref{stmap}) we obtain:
$$
\bar{\Phi}^{*}_{v,\lb}( \alpha_{\Box_1,\Box_2} )=\varphi^{\lb}_{\Box_1}-\varphi^{\lb}_{\Box_2}
$$
By (\ref{psr}) we have:
$$
\Delta^{+}\setminus\Delta^{+}_{\lb}=\{ \alpha_{\Box_1,\Box_2}| \Box_1<\Box_2, \rho^{\lb}_{\Box_1}\neq \rho^{\lb}_{\Box_2}   \}
$$
The last two formulas give the denominator in (\ref{shrf}), and the numerator is given by proposition \ref{prth}.
Finally, the summand in (\ref{shrf}) is symmetric with the respect to the action of $W_\lb$ thus we have:
$$
\sum\limits_{W/ W_{\lb}} = \dfrac{1}{| W_{\lb}|}\,\sum\limits_{W}
$$
From (\ref{stabg}) we have $W_\lb=\prod\limits_{s} S_{h_{s}} $, and thus $|W_\lb|=\prod\limits_{s} \,h_s (\lb)!$. This finishes the prove for $\cham_+$. Now, to obtain coefficients for the inverse chamber, it is enough to substitute $\cham_+\rightarrow \cham_-$ in the numerator, and replace the product over $\Box_1<\Box_2$ by same product over $\Box_1>\Box_2$. The last, corresponds to taking the opposite order on boxes induced by opposite chamber.
\end{proof}
Note, that the $r=1$ case of this theorem was proved in \cite{Shenf1}.

\subsection{Polynomials representing cohomology classes}
We have the tori $A\subset B\subset C$ acting on the instanton moduli space $\msp(r,n)$ such that
$A$ and $B$ preserve the symplectic form and $C$ dilates it with a character $\hbar=t_1+t_2$.
Remind that the fixed set of $A$ has the form:
$$
\msp^{A}(r,n) = \coprod\limits_{n_1+...+n_r=n}\, Hilb_{n_1}\times...\times Hilb_{n_r}
$$
such that the corresponding equivariant cohomology can be identified with the space of polynomials by the Nakajima construction:
$$
H^{\bullet}_{C}\Big( \coprod \limits_{n=0}^{\infty}\,\msp^{A}(r,n) \Big)=\matC[p^{(1)}_1,p_2^{(1)},....p^{(2)}_1,p_2^{(2)},...p^{(r)}_1,p_2^{(r)}...]\otimes \matC(u_1,...,u_r,t_1,t_2)
$$
The fixed set $\msp^{B}(r,n)=\msp^{C}(r,n)$ is discrete and consist of isolated fixed points labeled by
$r$-tuple of partitions $\lb$ consisting of $|\lb|=n$ boxes.
The generalized Jack polynomials $J_{\lb}$ and $J_{\lb}^{*}$ give two dual bases in this space. In addition we have two dual bases composed of Schur polynomials:
\be \label{schurbas}
s_{\lb}(p_{k}^{(i)})=s_{\lambda_1}(p_{k}^{(1)})...s_{\lambda_r}(p_{k}^{(r)}), \ \ s^{*}_{\lb}(p_{k}^{(i)})=s^{*}_{\lambda_1}(p_{k}^{(1)})...s^{*}_{\lambda_r}(p_{k}^{(r)})
\ee
where  $s_{\lambda}(p_{k})$ are the standard Schur polynomials, and $s^{*}(p_{k})$ their dual with respect to the scalar product  $\langle,\rangle$:
$$
\langle s_{\lambda}, s^{*}_{\mu}\rangle=\delta_{\lambda,\mu}
$$
such that explicitly $s^*_{\lambda}(p_k)=s_{\lambda}(-t_2/t_1 p_k)$.

In the previous sections we fixed  defined opposite chambers $\cham_{\pm} \subset \fb$ and $\cham^{\prime}_{\pm} \subset \fa$.
in this section we will also need the chambers of one-dimensional torus $B/A$ denoted as:
$$
\cham_{+}/\cham^{\prime}_{+}=\{ t_1>0 \}, \ \ \ \cham_{-}/\cham^{\prime}_{-}=\{ t_1<0 \}
$$
Let us consider the stable maps associated with this chambers
$$
Stab_{\cham^{\, \prime}_{\pm}} : H^{\bullet}_{C}( \msp^{A}(r,n) ) \longrightarrow H^{\bullet}_{C}( \msp(r,n) )
$$
$$
Stab_{\cham_{\pm}} : H^{\bullet}_{C}( \msp^{B}(r,n) ) \longrightarrow H^{\bullet}_{C}( \msp(r,n) )
$$
$$
Stab_{\cham_{\pm}/\cham^{\, \prime}_{\pm}} : H^{\bullet}_{C}(  \msp^{B}(r,n)  ) \longrightarrow H^{\bullet}_{C}(  \msp^{A}(r,n)  )
$$
\begin{theorem}
\label{invth}
The basis of the Schur polynomials (\ref{schurbas}) is related to the basis of the generalized Jack polynomials by the following transition matrices:
\be
\label{eqsj}
s_{\lb}=\sum\limits_{|\lnu|=|\lb|}\, U_{\lb,\lnu}, \frac{J_{\lnu}}{E_{\lnu,\lnu}}, \ \ \ s^{*}_{\lb}=\sum\limits_{|\lnu|=|\lb|}\, U^{*}_{\lb,\lnu}, \dfrac{J^{*}_{\lnu}}{E_{\lnu,\lnu}}
\ee
where\begin{small}
$$
  {{U}}_{\bar \lambda,\bar\mu}=\dfrac{1}{\frak{z}({\lb})} \sum\limits_{\sigma\in S_{|\lb|}}\,
\dfrac{\prod\limits_{\Box_1,\Box_2=1}^{|\lm|} \, \Big(\varphi^{\sigma\lm}_{\Box_2} -
 \varphi^{\sigma\lm}_{\Box_1} +t_1 \Big)^{\ast\langle \rho^{\lb}_{\Box_2}-\rho^{\lb}_{\Box_1}+t_1 | \cham_{+}\rangle} \prod\limits_{d=1}^{r}\prod\limits_{\Box=1}^{|\lm|}\Big(u_d -\varphi^{\sigma\lm}_{\Box} \Big)^{\ast \langle u_d-\rho^{\lb}_{\Box} | \cham_{+} \rangle}}
 {\prod\limits_{\Box_1<\Box_2 \atop \rho^{\lb}_{\Box_1}\neq \rho^{\lb}_{\Box_2} } \Big( \varphi^{\sigma\lm}_{\Box_1}-\varphi^{\sigma\lm}_{\Box_2} \Big)
 \Big( \varphi^{\sigma\lm}_{\Box_1}-\varphi^{\sigma\lm}_{\Box_2} +\hbar\Big)}
$$
$$
{{U}}^{*}_{\bar \lambda,\bar\mu}=\dfrac{1}{\frak{z}({\lb})} \sum\limits_{\sigma\in S_{|\lb|}}\,
\dfrac{\prod\limits_{\Box_1,\Box_2=1}^{|\lm|} \, \Big(\varphi^{\sigma\lm}_{\Box_2} -
 \varphi^{\sigma\lm}_{\Box_1} +t_1 \Big)^{\ast\langle \rho^{\lb}_{\Box_2}-\rho^{\lb}_{\Box_1}+t_1 | \cham_{-}\rangle} \prod\limits_{d=1}^{r}\prod\limits_{\Box=1}^{|\lm|}\Big(u_d -\varphi^{\sigma\lm}_{\Box} \Big)^{\ast \langle u_d-\rho^{\lb}_{\Box} | \cham_{-} \rangle}}
 {\prod\limits_{\Box_1>\Box_2 \atop \rho^{\lb}_{\Box_1}\neq \rho^{\lb}_{\Box_2} } \Big( \varphi^{\sigma\lm}_{\Box_1}-\varphi^{\sigma\lm}_{\Box_2} \Big)
 \Big( \varphi^{\sigma\lm}_{\Box_1}-\varphi^{\sigma\lm}_{\Box_2} +\hbar\Big)}
$$
\end{small}
and
\be
E_{\lnu,\lnu}=\prod\limits_{i,j=1}^{r}\, e_{\nu_{i},\nu_{j}}(u_i-u_j)
\ee
is the Euler class of the tangent bundle at $\lnu$.
\end{theorem}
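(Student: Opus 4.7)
The plan is to invert the expansions of Theorem \ref{thmone} by a pure orthogonality argument using the two dualities already at our disposal: the pairing $\langle J_{\lb}, J^{*}_{\lm}\rangle = E_{\lb,\lm}\,\delta_{\lb,\lm}$ established in Section \ref{def}, and the orthogonality $\langle s_{\lb}, s^{*}_{\lm}\rangle = \delta_{\lb,\lm}$ of the dual Schur bases. No further geometric input is needed once Theorem \ref{thmone} is available.

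First I would verify $\langle s_{\lb}, s^{*}_{\lm}\rangle = \delta_{\lb,\lm}$. Because $s^{*}_{\lambda}(p_k) = s_{\lambda}(-t_2/t_1\, p_k)$, the differential-operator formula (\ref{spdif}) for the scalar product reduces, on each color, to the standard Hall inner product for which the Schur functions form an orthonormal basis; the tensor-product structure across the $r$ colors yields the claim.

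Next I would apply $\langle s_{\lb}, \cdot\rangle$ to the expansion $J^{*}_{\lnu} = \sum_{\lm} T^{*}_{\lnu,\lm}\, s^{*}_{\lm}$ of Theorem \ref{thmone} to obtain $\langle s_{\lb}, J^{*}_{\lnu}\rangle = T^{*}_{\lnu,\lb}$. Writing the desired expansion as $s_{\lb} = \sum_{\lnu} c_{\lb,\lnu}\, J_{\lnu}$ and pairing both sides with $J^{*}_{\lnu}$, the $J$--$J^{*}$ duality gives $\langle s_{\lb}, J^{*}_{\lnu}\rangle = c_{\lb,\lnu}\, E_{\lnu,\lnu}$. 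Combining, $c_{\lb,\lnu} = T^{*}_{\lnu,\lb}/E_{\lnu,\lnu}$, which reduces the first equation of (\ref{eqsj}) to the combinatorial identity $U_{\lb,\lnu} = T^{*}_{\lnu,\lb}$. This is verified by direct inspection: swapping the arguments $\lb \leftrightarrow \lm$ in the $T^{*}$-formula of Theorem \ref{thmone} produces exactly the $U$-formula stated here, matching term by term the numerator factors $\varphi^{\sigma\lm}$, the chamber signs of $\rho^{\lb}$ against $\cham_{+}$, the denominator range $\Box_1<\Box_2$ with $\rho^{\lb}_{\Box_1}\neq\rho^{\lb}_{\Box_2}$, and the normalization $\frak{z}(\lb)$.

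The second identity $s^{*}_{\lb} = \sum_{\lnu} U^{*}_{\lb,\lnu}\, J^{*}_{\lnu}/E_{\lnu,\lnu}$ is obtained by the mirror computation: pair $s^{*}_{\lb}$ against the expansion $J_{\lnu} = \sum_{\lm} T_{\lnu,\lm}\, s_{\lm}$ to get $\langle s^{*}_{\lb}, J_{\lnu}\rangle = T_{\lnu,\lb}$, and then use the same $J$--$J^{*}$ duality to read off $U^{*}_{\lb,\lnu} = T_{\lnu,\lb}/E_{\lnu,\lnu}\cdot E_{\lnu,\lnu}=T_{\lnu,\lb}$. Verifying that this equals the second claimed formula is identical bookkeeping, now with chamber $\cham_{-}$ and denominator range $\Box_1>\Box_2$. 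The only delicate point — and the single place where the argument could slip — is precisely this bookkeeping: one must confirm that after the swap $\lb \leftrightarrow \lm$ the $\ast$-exponents of Shenfeld's notation, the signs of the $\rho$-differences against the chosen chamber, and the box-ordering condition $\Box_1 \lessgtr \Box_2$ all line up with the displayed formulas.
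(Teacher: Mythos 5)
Your proposal is circular in the context of this paper, so it does not constitute a proof of Theorem \ref{invth}. You take Theorem \ref{thmone} as given and recover Theorem \ref{invth} by pairing against the dual bases; but in the paper the logical dependence runs the other way: Theorem \ref{thmone} has no independent proof — it is deduced \emph{from} Theorem \ref{invth} by exactly the duality computation you perform (using $\langle s_{\lb},s^{*}_{\lm}\rangle=\delta_{\lb,\lm}$ and $\langle J_{\lb},J^{*}_{\lm}\rangle=\delta_{\lb,\lm}E_{\lb,\lb}$ to get $T_{\lb,\lnu}=U^{*}_{\lnu,\lb}$ and $T^{*}_{\lb,\lnu}=U_{\lnu,\lb}$). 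So "inverting Theorem \ref{thmone}" begs the question: the entire content of Theorem \ref{invth} is the derivation of the combinatorial formulas for $U_{\lb,\lnu}$ and $U^{*}_{\lb,\lnu}$, and that is precisely the geometric step you declare unnecessary. Your orthogonality bookkeeping and the observation $U_{\lb,\lnu}=T^{*}_{\lnu,\lb}$ are correct as identities between the displayed formulas, but they cannot produce either side without an independent source for one of them.

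What the paper actually does (and what is missing from your proposal) is the following. One expands the stable envelope $Stab_{\cham_{+}}(\lb)\in H^{\bullet}_{C}(\msp(r,n))$ in the localized fixed-point basis, with coefficients $\left.Stab_{\cham_{+}}(\lb)\right|_{\lnu}$ divided by the Euler class $\left.e(T\msp(r,n))\right|_{\lnu}=E_{\lnu,\lnu}$; the restrictions are exactly the sums $U_{\lb,\lnu}$ by Theorem \ref{rstthm}, which is where Shenfeld's abelianization theorem \ref{Shenth}, the choice of dominant lifts, and the counting factor $\frak{z}(\lb)$ enter. One then transports this identity to the polynomial side using the definition $Stab_{\cham^{\prime}_{+}}(J_{\lnu})=[\lnu]$, the factorization of stable maps $Stab_{\cham_{+}}=Stab_{\cham^{\prime}_{+}}\circ Stab_{\cham_{+}/\cham^{\prime}_{+}}$ (lemma 3.6.1 of Maulik--Okounkov), and the $r=1$ fact that $Stab_{\cham_{+}/\cham^{\prime}_{+}}(\lb)=s_{\lb}$, i.e.\ that stable envelopes of points of the Hilbert scheme are Schur polynomials; the case of $\cham_{-}$ gives $U^{*}$. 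If you want a self-contained argument, you must supply this chain (or an equivalent derivation of the $U$-formulas); alternatively, you would need an independent proof of Theorem \ref{thmone}, which neither the paper nor your proposal provides.
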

\begin{proof}
We will prove the first half of the theorem for $U_{\lb,\lnu}$, the proof for $U^{*}_{\lb,\lnu}$ is analogous.
Consider the class $Stab_{\cham_{+}} (\lb) \in H^{\bullet}_{C}( \msp(r,n) )$. The expansion of this class in classes of the fixed points $[\lnu]\in H^{\bullet}_{C}( \msp(r,n) )$ has the form:
$$
Stab_{\cham_{+}} (\lb)=\sum\limits_{|\lnu|=|\lb|}\, \dfrac{\left.Stab_{\cham_{+}} (\lb)\right|_{\lnu}}{ \left.e\Big( T \msp(r,n)  \Big)\right|_{\lnu} }\,[\lnu]
$$
where the norm $e\Big( T \msp(r,n)  \Big)$ is the Euler class of the tangent bundle on the instanton moduli space. The denominator of this expression is given by theorem \ref{rstthm}, thus $\left.Stab_{\cham_{+}} (\lb)\right|_{\lnu}=U_{\lb,\lnu}$. Computation of the Euler class gives:
$$
\left.e\Big( T \msp(r,n)  \Big)\right|_{\lnu}=\prod\limits_{i,j=1}^{r}\, e_{\nu_{i},\nu_{j}}(u_i-u_j)=E_{\lnu,\lnu}
$$
Now, the theorem follows from the identities:
$$
Stab_{\cham^{\, \prime}_{+}}\Big( J_{\lb} \Big)=[\lb], \ \ \ Stab_{\cham^{\, \prime}_{+}}\Big( s_{\lb} \Big)=Stab_{\cham_{+}}\Big(\lb \Big)
$$
and from the fact that the localized map $Stab_{\cham^{\, \prime}_{+}}$ is an isomorphism. The first identity is the definition of generalized Jack polynomials. For the second one, we note that we have the following factorization of stable maps (see lemma 3.6.1 in \cite{Maulik:2012wi}):
$$
Stab_{\cham_{+}}=Stab_{\cham^{\, \prime}_{+}} \circ Stab_{\cham_{+}/\cham^{\, \prime}_{+}}
$$
The stable envelope of a point on the Hilbert schemes is given by Schur polynomial $Stab_{\cham_{+}/\cham^{\, \prime}_{+}}(\lb)=s_{\lb}$ \cite{Shenf1}, and the theorem follows.
\end{proof}
Now, the proof of theorem \ref{thmone} is elementary.
\begin{proof}
From (\ref{eqsj}) we obtain $T=E U^{-1}$, where $E=diag(E_{\lb,\lb})$ is a diagonal matrix.
From the scalar product of the first equation and the second equations of (\ref{eqsj}), using the fact $\langle s_{\lb}, s_{\lm} \rangle=\delta_{\lb,\lm}$ and $\langle J_{\lb}, J^*_{\lm} \rangle=\delta_{\lb,\lm} E_{\lb,\lb}$ we obtain:
$$
1=U E^{-1} U^{* t}
$$
where $L^{t}$ denotes transposed matrix. Thus we obtain $T_{\lb,\lnu}=U^{*}_{\lnu,\lb}$. Same consideration applies to $T^{*}_{\lm,\lnu}$. Now theorem follows from explicit formulas of the previous theorem.
\end{proof}

\section{Appendix A: Stable Map \label{sten}}
\setcounter{equation}{0}
\def\theequation{A.\arabic{equation}}
In this section, following \cite{Maulik:2012wi}, we recall the definition of \textit{the stable map} playing important role in the this paper.

Assume that a pair of algebraic tori  $A\subset T$ acts on the symplectic variety $X$. This action induces the action on $H^{0}(\Omega^2_{X})$. Assume that the induced action of $T$ on $H^{2}(X)$ scales the symplectic form $\omega$. It implies that the one-dimensional
subspace ${\mathbb{C}}\omega\subset H^{2}(X)$ is a subrepresentation of $T$. We denote by $\hbar$ its character. Assume, that the action of the smaller torus $A$ preserves $\omega$.

%We assume that $X$ has a symplectic form $\omega\in H^{0}(\Omega^2_{X})$, which is an eigenvector of the induced $T$-action, and $\omega$ is fixed by $A$.
%Let $i^{\ast} : H^{\bullet}_{T}(X)\rightarrow H^{\bullet}_{T}(X^{A})$ is the map induced by the inclusion $i:=X^{A}\rightarrow X$ of the fixe .
Our goal in this section is to describe the natural map defined in \cite{Maulik:2012wi}:
$$
\textrm{Stab}_{\cham}: H^{\bullet}_{T}(X^{A}) \rightarrow H^{\bullet}_{T}(X)
$$
depending on chamber $\cham$ in the Lie algebra $a_{\mathbb{R}}$.  For a fixed cycle $\gamma \in H^{\bullet}_{T}(X^{A})$ the element
$\textrm{Stab}_{\cham}(\gamma)\in H^{\bullet}_{T}(X)$ is called the \textit{stable envelope} of $\gamma$.

%%%%%%%%%%%%%%%%%%%%%%%%%%%%%%%%%%%%%%%%%%%%%%%%%%%%%%%%%%%%%%%%%%%%%%%%%%%%%%%
%%%%%%%%%%%%%%%%%%%%%%%%%%%%%%%%%%%%%%%%%%%%%%%%%%%%%%%%%%%%%%%%%%%%%%%%%%%%%%%

\subsection{Chamber decomposition \label{cdec}}

Let $A\simeq ({\mathbb{C}}^{\ast})^r $ be an algebraic torus of rank $r$. Let
\be
\label{coc}
 c(A)=\{\, A \rightarrow {\mathbb{C}}^{\ast} \,\}\simeq{{\mathbb{Z}}}^r, \ \ \ t(A)=\{\, {\mathbb{C}}^{\ast} \rightarrow A \,\}\simeq{{\mathbb{Z}}}^r
\ee
be the group of characters and cocharacters respectively.  We define the real part of the Lie algebra and its dual as:
\be
\label{rlie}
a_{{{\mathbb{R}}}}=t(A)\otimes_{{\mathbb{Z}}}{\mathbb{R}}\simeq{\mathbb{R}}^r\subset \textrm{Lie}(A), \ \ \ a_{{{\mathbb{R}}}}^{\ast}=c(A)\otimes_{{\mathbb{Z}}}{\mathbb{R}}\simeq{\mathbb{R}}^r
\ee
The natural pairing $t(A)\times c(A)\rightarrow {{\mathbb{Z}}}$ linearly extends to the pairing for the real Lie algebra:
$$\langle \ \ , \ \ \rangle : a_{{{\mathbb{R}}}} \times a_{{{\mathbb{R}}}}^{\ast} \rightarrow {{\mathbb{R}}}$$

\noindent
\begin{definition}
Let $X^{A}$  be the fixed set of $A$. The normal bundle $N$ to $X^A$ in $X$ has a natural structure of an $A$-module and  splits to the direct sum of complex, one-dimensional, irreducible components. The subset $\Delta\subset a_{\mathbb{R}}^{\ast}$ consisting of the characters appearing in $N$ is called \textit{root system } of $A$.
\end{definition}

A weight $\alpha \in a_{{{\mathbb{R}}}}^{\ast}$ defines a hyperplane in $a_{{{\mathbb{R}}}}$:
\be
\textrm{ker}_\alpha=\{v\in a_{{{\mathbb{R}}}}: \langle \alpha, v \rangle =0\}
\ee
The  hyperplanes corresponding to the roots partition $a_{{{\mathbb{R}}}}$ into the set of open chambers:
\be
a_{{\mathbb{R}}} \setminus \bigcup\limits_{\alpha \in \Delta}\, \textrm{ker}_{\alpha }=\coprod\limits_{i} \cham_{i}
\ee

The  hyperplanes $\textrm{ker}_{\alpha}$, clearly,  are walls of the chambers. In general, the walls $\textrm{ker}_{\alpha}$ define a stratification of the space $a_{{\mathbb{R}}}$ by the chain of sets: the set of points that do not lie on any wall (these are chambers), the set of points lying on exactly one wall, the set of points lying on the intersection of two walls and so on.

The stratification of $a_{{\mathbb{R}}}$ encodes the information about the $A$-action on $X$. Indeed, consider a cocharacter $\sigma: {\mathbb{C}}^{\ast} \rightarrow A$. It defines certain ${\mathbb{C}}^{\ast}$-action on $X$. If $\sigma$ does not belong to some wall i.e. is inside one of the chambers then,
${\mathbb{C}}^{\ast}$ - action has the same set of the fixed points $X^{{\mathbb{C}}^{\ast}} = X^{A}$. Assume now, that $\sigma$ lies on exactly one wall $\ker_{\alpha}$. Then torus ${\mathbb{C}}^{\ast}$ acts trivially on the component of the normal bundle corresponding to the character $\alpha$. Therefore,  ${\mathbb{C}}^{\ast}$ preserves the corresponding direction in $X$ and the fixed set $X^{{\mathbb{C}}^{\ast}}$ gets larger then $X^{A}$.   Thus, the stratification of $a_{{{\mathbb{R}}}}$ by the walls $\ker_{\alpha}$ corresponds to the types of the fixed sets $X^{{\mathbb{C}}^{\ast}}$ arising from different choice of the cocharacters  $\sigma: {\mathbb{C}}^{\ast} \rightarrow A$. In the extreme case $\sigma=0$, corresponding to the intersection of all walls, we have $X^{{\mathbb{C}}^{\ast}}=X$.
%the roots are again of the form $a_{i}-a_{j}$ and correspond to the standard roots of $GL(r)$, and the chambers are the Weyl chambers.  $\fC$
\subsection{Stable leaves and slopes}
\noindent
\begin{definition}
Let us fix some chamber $\cham \subset a_{\mathbb{R}}$ and let $\sigma\in \cham$ be a cocharacter. We say that the point $x \in X$ is $\cham$\textit{-stable} if the following limit exists:
\be
\lim_{\cham} x\stackrel{\textrm{def}}{=} \lim\limits_{z\rightarrow 0} \sigma(z)\cdot x \in X^{A}
\ee
\end{definition}
This definition does not depend on the choice of $\sigma$ in the chamber $\cham$, which explains the notation $\lim\limits_{\cham}$.
For a component $Z$ of the fixed set $ X^{A}$ we define its  \textit{stable leaf} as the set of stable points "attracting" to $Z$:

\noindent
\begin{definition}
\be
\textrm{Leaf}_{\cham}(Z)=\{ x|\lim_{\cham} x  \in Z \}
\ee
\end{definition}
The choice of a chamber $\cham$ defines a partial order on the components $Z\subset X^{A}$. We say that:
 $$Z_1 \succeq Z_{2} \ \  \Leftrightarrow \ \ \overline{\textrm{Leaf}_{\cham}(Z_1)} \cap Z_2\neq \emptyset.$$ Using this ordering we define the  \textit{stable slope} of a component  $Z\subset X^{A}$ as follows:

\noindent
\begin{definition}
\be
\textrm{Slope}_{\cham}(Z)=\coprod\limits_{Z^{\prime} \preceq\, Z} \textrm{Leaf}_{\cham}(Z^{\prime}).
\ee
\end{definition}
\noindent

\subsection{Polarization}
 Let $Z\subset X^{A}$ be a component of the fixed set. The choice of a chamber~$\cham$ gives a decomposition of the normal bundle to $Z$ in $X$ into the weight spaces that are positive or negative on the chamber $\cham$:
$$
N_{Z}=N_{+}\oplus N_{-}
$$
Remind, that by our assumption the action of $A$ preserves the symplectic form $\omega$ and $T$ scales it with character $\hbar$. Thus, we have:
\be
\label{polar}
(N_{+})^{\vee}=N_{-}\otimes \hbar
\ee
where for convenience we denoted by the same symbol $\hbar$  the trivial $T$-equivariant line bundle over $Z$ with the action of $T$ on its fiber corresponding to the character $\hbar$.

Assume that $\alpha_{i}$, $i=1...\textrm{codim}(Z)/2$ are the weights of $N_{+}$ then,  the $A$-weights of $N_{-}$ are given by $(-\alpha_{i})$. Therefore, the $A$-equivariant Euler class of $N_{Z}$ (with a sign) is a perfect square:
\be
\label{Ec}
\varepsilon^2=(-1)^{\textrm{codim}(Z)/2} e(N_Z)=\prod\limits_{i=1}^{\textrm{codim}(Z)/2} \alpha_{i}^2\ \
\ee

\noindent
\begin{definition}
The \textit{polarization} of $Z$ is a formal choice of a sign in the square root of (\ref{Ec}):
\be
\left.\varepsilon\right|_{H^{\bullet}_{A}(\textrm{pt})}=\pm \prod\limits_{i=1}^{\textrm{codim}(Z)/2} \alpha_{i}
\ee
We say that the sign $\pm e(N_{-})\in {H^{\bullet}_{T}(Z)}$ is chosen according to the polarization if it restricts to $\varepsilon$ in $H^{\bullet}_{A}(Z)$.
\end{definition}
\subsection{Stable envelope \label{se}}
\label{stabth}
The stable envelope is defined by the following theorem.

\begin{theorem} Under assumption above, there exists a unique map of $H_{T}^{\bullet}(\textrm{pt})$ modules:
$$
\textrm{Stab}_{\cham, \varepsilon} : H^{\bullet}_{T}(X^{A}) \rightarrow H_{T}^{\bullet}(X)
$$
depending on the choice of chamber $\cham$ and polarization $\varepsilon$. For a component $Z\subset X^{A}$ and any $\gamma \in H^{\bullet}_{T}(Z)$ the stable envelope $\Gamma=\textrm{Stab}_{\cham,\varepsilon}(\gamma)$ is defined uniquely by the following properties:
\begin{itemize}
\item
$\textrm{supp}(\Gamma)\subset \textrm{Slope}_{\cham}(Z)$
\item $\left.\Gamma\right|_{Z}=\pm e(N_{-}) \cup \gamma$ with the sign chosen according to the polarization~$\varepsilon$.
\item $\deg_{A} \left.\Gamma\right|_{Z^{\prime}}< \textrm{codim}(Z^{\prime})/2$,  for any $Z^{\prime}> Z$
\end{itemize}
\end{theorem}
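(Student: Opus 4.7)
The plan is to establish uniqueness first by a support-plus-degree argument and then construct the envelope inductively along the partial order on fixed components.

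For uniqueness, suppose $\Gamma_1$ and $\Gamma_2$ are two candidate envelopes for $\gamma \in H^{\bullet}_T(Z)$. The difference $\Delta = \Gamma_1 - \Gamma_2$ has support in $\textrm{Slope}_\cham(Z)$, vanishes on $Z$, and satisfies $\deg_A \Delta|_{Z'} < \textrm{codim}(Z')/2$ for every $Z' \succ Z$. I would proceed by induction on the partial order, starting from minimal $Z' \succ Z$: the support condition forces $\Delta|_{Z'}$ to be divisible by the Euler class of the $\cham$-positive part of the normal directions through which the attracting variety $\overline{\textrm{Leaf}_\cham(Z)}$ meets $Z'$, and this Euler class has $A$-degree exactly $\textrm{codim}(Z')/2$. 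Combined with the strict degree bound, this forces $\Delta|_{Z'} = 0$. Ascending the partial order and invoking injectivity of equivariant localization yields $\Delta = 0$.

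For existence, I would construct $\Gamma = \textrm{Stab}_{\cham,\e}(\gamma)$ in two steps. First, produce any extension $\Gamma^{(0)}$ of $\pm e(N_-)\cup\gamma$ to a $T$-equivariant class on $X$ supported in $\overline{\textrm{Leaf}_\cham(Z)}$. The cleanest construction is via deformation to the normal cone of $Z$: on the total space of $N_Z \to Z$, the $\cham$-negative subbundle $N_-$ carries a canonical Thom class whose product with $\gamma$ is supported on $N_-$ itself (the linearized attracting set), and this class specializes to the desired $\Gamma^{(0)}$ on $X$. The sign is fixed by the polarization $\e$ via the identification $N_+^{\vee} = N_- \otimes \hbar$ from \eqref{polar}. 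Second, $\Gamma^{(0)}$ may violate the strict degree inequality at some $Z' \succ Z$; write $\Gamma^{(0)}|_{Z'} = e(N_-^{Z'}) \cup \delta_{Z'} + R_{Z'}$, where $\delta_{Z'}$ collects the parts in excess of the admissible degree and $R_{Z'}$ is the admissible remainder, and subtract $\textrm{Stab}_{\cham,\e}(\delta_{Z'})$, constructed by induction on the partial order restricted to components above $Z$. This subtraction cancels the excess degree at $Z'$ without disturbing the diagonal restriction on $Z$ or the support condition, because envelopes of higher strata are supported in $\textrm{Slope}_\cham(Z') \subset \textrm{Slope}_\cham(Z)$ and vanish on $Z$.

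The main obstacle is the base step of the existence construction: producing an initial class with both the correct support and the correctly polarized diagonal restriction. The closure $\overline{\textrm{Leaf}_\cham(Z)}$ is generally singular, so a naive Poincar\'e dual does not directly supply the diagonal with the sign dictated by $\e$. The deformation-to-the-normal-cone argument circumvents this by replacing $X$ with the linear model $N_Z$, where the construction reduces to a Thom class of a subbundle and the sign is canonical; verifying that specialization preserves the three defining properties is then routine, and the inductive correction terminates because only finitely many fixed components lie above $Z$ in the partial order.
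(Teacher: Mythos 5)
The paper does not actually prove this theorem: it states it in Appendix A and defers entirely to Maulik--Okounkov (``The proof of this theorem can be found in \cite{Maulik:2012wi}''), so there is no in-paper argument to compare against. Your outline is, in strategy, exactly the Maulik--Okounkov proof: uniqueness by induction along the partial order using support plus the strict $A$-degree bound, and existence by producing an initial class supported on the closure of the attracting set with the polarized restriction $\pm e(N_-)\cup\gamma$ and then inductively subtracting stable envelopes of lower strata to repair degree violations. So the architecture is right and nothing in it is a wrong turn.

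The one genuine gap is that both halves of your argument lean on a divisibility statement that you assert but never establish, and that statement is the technical core of the whole theorem. In the uniqueness step you claim that the support condition forces $\Delta|_{Z'}$ to be divisible by an Euler class of $A$-degree exactly $\textrm{codim}(Z')/2$; in the existence step you claim the decomposition $\Gamma^{(0)}|_{Z'}=e(N_-^{Z'})\cup\delta_{Z'}+R_{Z'}$ with $R_{Z'}$ of admissible degree. Neither follows formally from ``supported on $\textrm{Slope}_{\cham}(Z)$'': one must analyze how the closed set $\textrm{Slope}_{\cham}(Z)$ sits near $Z'$ (it is the union of $\overline{\textrm{Leaf}_{\cham}(Z')}$, whose conormal directions give $N_-^{Z'}$, with attracting sets of strictly lower strata, which contribute only the admissible remainder once the inductive hypothesis kills the intermediate restrictions), and then run the long exact sequence of the pair to extract the $e(N_-^{Z'})$ factor. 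Without this lemma the degree bound has no teeth. Two smaller points: the theorem as stated in the paper writes the degree condition for $Z'>Z$ while $\textrm{Slope}_{\cham}(Z)$ only meets components $Z'\preceq Z$, so the inequality should be read for components below $Z$ in the order --- your induction implicitly uses the correct convention, but you should say so; and the final step ``$\Delta$ vanishes at all fixed components hence $\Delta=0$'' needs the restriction map to the fixed locus to be injective on classes supported on the (generally non-proper) attracting set, which requires the standing formality/torsion-freeness hypotheses on $X$ rather than bare localization.
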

The proof of this theorem can be found in \cite{Maulik:2012wi}. In addition, the description of $\textrm{Stab}_{\cham, \varepsilon}$
as the Lagrangian correspondence can be found there.

As we mentioned above, the choice of polarization is a formality corresponding to choice of signs. We will use symbol $\textrm{Stab}_{\cham}$ for the stable map meaning that some polarization $\varepsilon$ is chosen.

\section{Appendix B: Some explicit formulae}
\setcounter{equation}{0}
\def\theequation{B.\arabic{equation}}
In this appendix we would like to write down examples of $J_{\lb}$ for several first values of $r$ and $n$. This examples are computed using the theorem \ref{thmone}, which allows perform a computations for relatively high number of boxes. The explicit formulae, however, become hairy very fast.
\subsection{Case $r=1$}
In this case the theorem  \ref{thmone} gives the expansion of the standard Jack polynomials corresponding to the parameter $\beta=-t_2/t_1$ in Schur polynomials.

\vspace{3mm}
\noindent
$
J_{{[1]}}=t_{{2}}s_{{[1]}}\\
J_{{[1,1]}}=2\,{t_{{2}}}^{2}s_{{[1,1]}}\\
J_{{[2]}}= \left( t_{{1}}+t_{{2}} \right) t_{{2}}s_{{[1,1]}}- \left( t
_{{1}}-t_{{2}} \right) t_{{2}}s_{{[2]}}\\
J_{{[1,1,1]}}=6\,{t_{{2}}}^{3}s_{{[1,1,1]}} \\
J_{{[2,1]}}=2\,{t_{{2}}}^{2} \left( t_{{1}}+t_{{2}} \right) s_{{[1,1,1
]}}-{t_{{2}}}^{2} \left( t_{{1}}-2\,t_{{2}} \right) s_{{[2,1]}}\\
J_{{[3]}}=t_{{2}} \left( 2\,t_{{1}}+t_{{2}} \right)  \left( t_{{1}}+t_
{{2}} \right) s_{{[1,1,1]}}-2\,t_{{2}} \left( t_{{1}}+t_{{2}} \right)
 \left( t_{{1}}-t_{{2}} \right) s_{{[2,1]}}+t_{{2}} \left( t_{{1}}-t_{
{2}} \right)  \left( 2\,t_{{1}}-t_{{2}} \right) s_{{[3]}}
$
\subsection{Case $r=2$}
In this case the fixed points are labeled by a pair of partitions and the theorem \ref{thmone} gives:

\vspace{3mm}
\noindent
$
J_{{[],[1]}}= \left( t_{{1}}+t_{{2}}-u_{{1}}+u_{{2}} \right) t_{{2}}s_
{{[],[1]}}\\
J_{{[1],[]}}= \left( t_{{1}}+t_{{2}} \right) t_{{2}}s_{{[],[1]}}-t_{{2
}} \left( u_{{1}}-u_{{2}} \right) s_{{[1],[]}}\\
J_{{[],[1,1]}}=2\,{t_{{2}}}^{2} \left( t_{{1}}+t_{{2}}-u_{{1}}+u_{{2}}
 \right)  \left( t_{{1}}+2\,t_{{2}}-u_{{1}}+u_{{2}} \right) s_{{[],[1,
1]}}\\
J_{{[],[2]}}=t_{{2}} \left( 2\,t_{{1}}+t_{{2}}-u_{{1}}+u_{{2}}
 \right)  \left( t_{{1}}+t_{{2}}-u_{{1}}+u_{{2}} \right)  \left( t_{{1
}}+t_{{2}} \right) s_{{[],[1,1]}}- \\
t_{{2}} \left( t_{{1}}-t_{{2}}
 \right)  \left( 2\,t_{{1}}+t_{{2}}-u_{{1}}+u_{{2}} \right)  \left( t_
{{1}}+t_{{2}}-u_{{1}}+u_{{2}} \right) s_{{[],[2]}}\\
J_{{[1],[1]}}= \left( 2\,t_{{1}}+t_{{2}}-u_{{1}}+u_{{2}} \right) {t_{{
2}}}^{2} \left( t_{{1}}+t_{{2}} \right) s_{{[],[1,1]}}+{t_{{2}}}^{2}
 \left( t_{{2}}-u_{{1}}+u_{{2}} \right)  \left( t_{{1}}+t_{{2}}
 \right) s_{{[],[2]}}+\\
 {t_{{2}}}^{2} \left( t_{{2}}-u_{{1}}+u_{{2}}
 \right)  \left( u_{{2}}-u_{{1}}+t_{{1}} \right) s_{{[1],[1]}}\\
J_{{[1,1],[]}}=2\,{t_{{2}}}^{2} \left( t_{{1}}+t_{{2}} \right)
 \left( t_{{1}}+2\,t_{{2}}-u_{{2}}+u_{{1}} \right) s_{{[],[1,1]}}-2\,{
t_{{2}}}^{2} \left( t_{{1}}+t_{{2}} \right)  \left( u_{{1}}-u_{{2}}
 \right) s_{{[],[2]}}-\\
 2\,{t_{{2}}}^{2} \left( t_{{1}}+t_{{2}} \right)
 \left( u_{{1}}-u_{{2}} \right) s_{{[1],[1]}}+2\,{t_{{2}}}^{2} \left(
u_{{1}}-u_{{2}} \right)  \left( t_{{2}}-u_{{2}}+u_{{1}} \right) s_{{[1
,1],[]}}\\
J_{{[2],[]}}= \left( t_{{1}}+t_{{2}} \right) t_{{2}} \left( 2\,t_{{1}}
u_{{1}}-
2\,t_{{1}}u_{{2}}+2\,{t_{{1}}}^{2}+3\,t_{{2}}t_{{1}}+{t_{{2}}}
^{2} \right) s_{{[],[1,1]}}-\\
\left( t_{{1}}+t_{{2}} \right) t_{{2}}
 \left( 2\,t_{{1}}u_{{1}}-2\,t_{{1}}u_{{2}}+2\,{t_{{1}}}^{2}-t_{{2}}t_
{{1}}-
{t_{{2}}}^{2} \right) s_{{[],[2]}}-\\
2\,{t_{{2}}}^{2} \left( t_{{1
}}+
t_{{2}} \right)  \left( u_{{1}}-u_{{2}} \right) s_{{[1],[1]}}+\\
t_{{2
}} \left( t_{{1}}+t_{{2}} \right)  \left( u_{{1}}-u_{{2}}+t_{{1}}
 \right)  \left( u_{{1}}-u_{{2}} \right) s_{{[1,1],[]}}-t_{{2}}
 \left( t_{{1}}-t_{{2}} \right)  \left( u_{{1}}-u_{{2}}+t_{{1}}
 \right)  \left( u_{{1}}-u_{{2}} \right) s_{{[2],[]}}
$
\subsection{Case $r=3$}
In this case we recover explicit formulae obtained in  \cite{Mironov:2013oaa}:

\vspace{3mm}
\noindent
$
J_{{[],[],[1]}}=t_{{2}} \left( t_{{1}}+t_{{2}}-u_{{1}}+u_{{3}}
 \right)  \left( t_{{1}}+t_{{2}}-u_{{2}}+u_{{3}} \right) s_{{[],[],[1]
}}\\
J_{{[],[1],[]}}=t_{{2}} \left( t_{{1}}+t_{{2}} \right)  \left( t_{{1}}
+t_{{2}}-u_{{1}}+u_{{2}} \right) s_{{[],[],[1]}}-t_{{2}} \left( t_{{1}
}+t_{{2}}-u_{{1}}+u_{{2}} \right)  \left( u_{{2}}-u_{{3}} \right) s_{{
[],[1],[]}}\\
J_{{[1],[],[]}}= \left( t_{{1}}+t_{{2}} \right)  \left( u_{{1}}-u_{{2}
}+t_{{1}}+t_{{2}} \right) t_{{2}}s_{{[],[],[1]}}-t_{{2}} \left( t_{{1}
}+t_{{2}} \right)  \left( u_{{1}}-u_{{3}} \right) s_{{[],[1],[]}}+\\t_{{
2}} \left( u_{{1}}-u_{{2}} \right)  \left( u_{{1}}-u_{{3}} \right) s_{
{[1],[],[]}}
$
\bibliographystyle{utphys}
\bibliography{bib}
\
\end{document}